\newcommand{\co}{{\rm co}}
\newcommand{\bcmt}{\begin{comment}}
\newcommand{\ecmt}{\end{comment}}
\newcommand{\sdcRemoved}[1]{}
\renewcommand{\eal}{\end{align}}
\title{Indirect-adaptive Model Predictive Control\\ for Linear Systems with Polytopic 
Uncertainty}
\author{Stefano Di Cairano
 \thanks{S. Di Cairano is with Mitsubishi Electric Research Laboratories,
Cambridge, MA, email: {\tt
dicairano@ieee.org}
}}
\begin{document}
\maketitle

%\listoftodos

\begin{abstract}
We develop an indirect-adaptive model predictive control algorithm for uncertain linear 
systems subject to constraints. The system is modeled as a polytopic linear parameter varying 
system where the  convex combination vector is constant but unknown. Robust constraint 
satisfaction is obtained by constraints enforcing a  robust control invariant. The terminal cost 
and set are constructed from a 
parameter-dependent Lyapunov function and the associated control law. The proposed design 
ensures robust constraint satisfaction and  recursive feasibility, is 
input-to-state stable with respect to the parameter estimation error and it only requires the online 
solution of quadratic programs.  
 \end{abstract}

\section{Introduction}
% 
% \benum
%  \item MPC and actual applications: challenges and needs due to adaptation. Robust versus 
%adaptive (cite tube-based, MPC for LPV by LMI, etc.). different focus of adaptive (some 
%parameters unknown but can be estimated, and control design can be adjusted: need to 
%guarantee behavior during estimation, and controller re-design by feasible computational 
%methods).
%  \item adaptive MPC approaches. Problems, challenging construction, constraint satisfaction, 
%learning, dual control.
%   \item Link with previous work on MAPC and need for adaptation on the-loop. Challenges 
%(transient behavior, enforcement of constraints during the transient).
%   \item design of indirect-adaptive MPC algorithm for uncertain linear system, possibly subject to 
%linear constraints, that is:
%    \bitem
%     \item constructive through algorithms with guaranteed termination
%     \item computationally feasible, 
%      \item guarantees of performance during the estimation transient in terms of error as function 
%of the estimation error, and possibly constraint enforcement
%      \item SGAS when the estimation converged
%   \eitem 
%   \item we use: LPV framework based in part of our previous work, ISS as measure of transient 
%performance, RCI for enforcing constraints including during the transient.
%   \item examples (numerical)
%  \eenum 
% 
% \bcmt
% test
% \ecmt 
% 

In Model Predictive Control (MPC)~\cite{RM09} the 
prediction model is exploited for evaluating feasibility and performance of the sequence of 
actions to be selected by the controller. In several cases, some of the model parameters 
are uncertain   at   design time, especially when a  
controller is deployed to multiple instances of the plant, such as in automotive, factory 
automation, and aerospace applications~\cite{dicairano12,MPCautoSurvey12}, where 
the control algorithm and its auxiliary functions need also to have low complexity and 
computational effort, due to stringent cost, timing, and validation requirements.

For the cases when model parameters are uncertain, robust MPC methods have been 
proposed, see, e.g.,~\cite{KBM96,cuzzola02,mao03,tubeMPC,fukushima2005}. Some of the 
limitations of these methods are either in the computational cost, due to solving linear matrix 
inequalities (LMIs) at each control step~\cite{KBM96,cuzzola02,mao03}, or in  
applying online to additive disturbances~\cite{tubeMPC}, or in imposing limitations on cost 
function and terminal set~\cite{fukushima2005}. These limitations often arise due to 
considering the challenging case where uncertain parameters are constantly changing during 
system 
operation. 

However, when the parameters are unknown but are constant or slowly varying over time, an 
alternative approach is to learn their values, resulting in adaptive control techniques that ensure 
safe operation during the learning phase, and improve performance, for instance in terms of 
stability or tracking, as the learning proceeds. Adaptive MPC algorithms 
have been recently proposed based on different methods, such as a 
comparison 
model~\cite{kim2008}, min-max approaches with open-loop 
relaxations~\cite{adetola2009adaptive}, learning of constant 
offsets~\cite{aswani2013provably}, and set membership 
identification~\cite{tanaskovic2014}.  
Another class of adaptive MPC 
algorithms  focuses on  ``dual objective control'',  i.e., controlling the system while 
guaranteeing 
sufficient excitation  for identifiability, see, 
e.g.,~\cite{marafioti2010persistently, rathousky2013mpc,WD14}, and references therein.

In~\cite{dic15}, a MPC design  allowing for the prediction 
model to be adjusted after deployment was proposed. In this  paper 
we propose a MPC design that  operates concurrently with a parameter estimation scheme, thus 
resulting in an indirect-adaptive MPC (IAMPC) approach, that retains constraints satisfaction 
guarantees and certain stability properties. 
Motivated by the case of the unknown but constant (or slowly varying) parameters and by the need 
to keep computational burden small for application in fast systems equipped with low cost 
microprocessors~\cite{dicairano12,MPCautoSurvey12}, here  we do not 
seek robust stability as in robust MPC, but rather robust  constraint satisfaction and an 
input-to-state stable (ISS) closed-loop with respect to 
the estimation error. ISS will hold with only minimal assumptions on the estimates, and if the 
correct parameter value will be eventually estimated (possibly in finite time, see, 
e.g.,~\cite{adetola2008finite}), by the definition of ISS the 
closed-loop will become asymptotically stable (AS). Constraint satisfaction holds even 
if the parameters keep changing.  

For uncertain systems represented as polytopic linear difference inclusions (pLDI) we design a 
parameter-dependent quadratic 
terminal 
cost and a robust 
terminal constraint using a parameter-dependent Lyapunov function  
(pLF)~\cite{daafouz01}  and 
its corresponding stabilizing control law. Robust constraint satisfaction in presence 
of parameter estimation error is obtained 
by enforcing robust control invariant set constraints~\cite{BM08}. A  parameter 
prediction
update law is also designed to ensure the desired properties. The proposed 
IAMPC allows uncertainty in the system dynamics, as opposed to additive disturbances/offsets 
as in~\cite{tubeMPC,aswani2013provably}, and only solves quadratic 
programs (QPs), as opposed to robust MPC methods that require the 
online 
solution of LMIs~\cite{cuzzola02,mao03}. 

The paper is structured as follows. After the preliminaries in 
Section~\ref{sec:prelim}, in 
Section~\ref{sec:unconstr} we 
design 
the   cost function  
that results in the
unconstrained IAMPC to be ISS with respect to the estimation error. For 
constrained IAMPC,  in Section~\ref{sec:constr} we first design the terminal set that 
guarantees that the nominal 
closed-loop is AS. Then, we design robust 
constraints that ensure that the system constraints are satisfied and the IAMPC remains 
feasible even in presence of parameter estimation error. In Section~\ref{sec:IAMPC} we combine 
the cost function and the 
constraints with a parameter estimate prediction update and we briefly discuss the required 
properties of the estimator, hence describing the complete IAMPC. In 
Section~\ref{sec:sims} we show a numerical example. 
Conclusions and future developments are discussed in Section~\ref{sec:conclus}.

{\em Notation:} $\rr$, $\rr_{0+}$, $\rr_+$, $\zz$, $\zz_{0+}$, $\zz_+$ are the sets of real, 
nonnegative real, positive real, and integer, nonnegative integer,
positive
integer numbers. We denote interval of numbers using notations like 
$\zz_{[a,b)}=\{z\in\zz:a\leq 
z < 
b\}$. $\co\{\XX\}$ denotes the convex hull of the set $\XX$.
For vectors, inequalities are intended componentwise, while for 
matrices indicate (semi)definiteness, and $\lambda_\rmin(Q)$ denotes the smallest 
eigenvalue of $Q$.  By $[x]_i$
we denote the $i$-th component of vector $x$, and by $I$  and $0$ the identity and the 
``all-zero''
matrices of appropriate dimension. $\|\cdot\|_p$ denotes the $p$-norm, and 
$\|\cdot\|=\|\cdot\|_2$. For a discrete-time signal $x\in\rr^n$ with sampling 
period 
$T_s$, 
$x(t)$
is the state a sampling instant $t$, i.e., at time $T_st$, $x_{k|t}$ denotes the
predicted value of  $x$ at sample $t+k$, i.e., $x({t+k})$, based on data at sample $t$, and 
$x_{0|t}=x(t)$. A function $\alpha:\rr_{0+}\rar\rr_{0+}$ is of class $\KK $ if it is 
continuous, strictly increasing,  $\alpha(0)=0$; if in addition 
$\lim_{c\rar\infty}\alpha(c)=\infty$, $\alpha$ is of class $\KK_\infty$.

\section{Preliminaries and Problem Definition}\label{sec:prelim}

Details on the following standard definitions and results  are, e.g., in~\cite[Appendix B]{RM09}.

\begin{definition}
Given  $x(t+1)=f(x(t),w(t))$, $x\in\rr^n$,  $w\in\WW\subseteq \rr^d$, a set $\SS\subset\rr^n$ is 
robust positive invariant (RPI) for $f$ iff 
for all $x\in\SS$, $f(x,w)\in\SS$, for all $w\in\WW$. If $w=\{0\}$, $\SS$ is called positive 
invariant (PI).\cvd
\end{definition}

\begin{definition}\label{def:RCI}
Given  $x(t+1)=f(x(t),u(t),w(t))$, $x\in\rr^n$, $u\in\UU\subseteq \rr^m$, $w\in\WW\subseteq 
\rr^d$, a set $\SS\subset\rr^n$ is 
robust control invariant (RCI) for $f$ iff 
for all $x\in\SS$, there exists $u\in\UU$ such that $f(x,u,w)\in\SS$, for all $w\in\WW$. If 
$w=\{0\}$, $\SS$ is called control
invariant (CI).\cvd
\end{definition}

\begin{definition} Given $x(t+1)=f(x(t))$, $x\in\rr^n$, and a PI set $\SS$ for $f$, $0\in\SS$, 
a function $\VV:\rr^n\rar\rr_{0+}$ 
such that there exists $\alpha_1,\alpha_2,\alpha_\Delta\in\KK_\infty$ such that
$\alpha_1(\|x\|)\leq \VV(x) \leq \alpha_2(\|x\|)$, $\VV(f(x))-\VV(x)\leq -\alpha_\Delta(\|x\|)$ for 
all 
$x\in\SS$  is a Lyapunov function  for  
$f$ in $\SS$.\cvd
\end{definition}

\begin{definition}
Given $x(t+1)=f(x(t),w(t))$, $x\in\rr^n$, $w\in\WW\subseteq \rr^d$, and a RPI set $\SS$ for 
$f$, 
 $0\in\SS$, 
a function $\VV:\rr^n\rar\rr_+$ 
such that there exists $\alpha_1,\alpha_2,\alpha_\Delta\in\KK_\infty$ and $\gamma\in\KK$ such 
that
$\alpha_1(\|x\|)\leq \VV(x) \leq \alpha_2(\|x\|)$, $\VV(f(x))-\VV(x)\leq 
-\alpha_\Delta(\|x\|)+\gamma(\|w\|)$ for 
all $x\in\SS$, $w\in\WW$   
is an input-to-state stable (ISS) Lyapunov function  for  
$f$ in $\SS$ with respect to $w$.
\end{definition}

\begin{result}
 Given $x(t+1)=f(x(t))$, $x\in\rr^n$, and a PI $\SS$ for $f$, $0\in\SS$, if there exists 
a Lyapunov function for $f$ in $\SS$, the origin 
is asymptotically 
stable (AS) for  $f$ with domain of attraction $\SS$.
Given $x(t+1)=f(x(t),w(t))$, $x\in\rr^n$, $w\in\WW\subseteq \rr^d$, and a RPI $\SS$ for 
$f$, 
$0\in\SS$, if there exists 
a ISS Lyapunov function for $f$ in $\SS$, the origin 
is ISS for  $f$  with respect to $w$ with domain of attraction $\SS$.% \cvd
\end{result}

We consider the  uncertain constrained discrete-time systems with sampling period $T_s$,
 \bsube \label{eq:system}
 \beqar 
   x(t+1) &=& \sum_{i=1}^{\ell }[\bar \xi]_iA_ix(t)+Bu(t),\label{eq:pLDI} \\
            && x\in\XX,\ u\in\UU \label{eq:sysConstr} 
 \eeqar
 \esube
where 
$A_i\in\rr^{n\times n}$, 
$i\in \zz_{[1,\ell]}$ and $B$ are known matrices of appropriate size, and $\XX\subseteq \rr^n$, 
$\UU\subseteq \rr^m$ 
are constraints on system states and inputs. In~\eqref{eq:system}, the uncertainty is associated 
to 
$\bar \xi \in 
\Xi\subset  
\rr^\ell$, which is unknown and constant or changing slowly with respect to the system dynamics, 
and $\Xi=\{\xi\in\rr^\ell: 
0\leq [\xi]_i\leq 1$, $\sum_{i=1}^{\ell }[\xi]_i=1\}$. We call $\bar \xi$  {\em convex 
combination 
vector}, since it 
describes a convex combination of the ``vertex systems'' $f_i(x,u)=A_ix+Bu$, $i\in 
\zz_{[1,\ell]}$.
\begin{assumption}\label{ass:estimate}
An estimator is computing the estimate $ \xi(t)$ of $\bar \xi$ such that $ \xi(t)\in\Xi$ for all 
$t\in\zz_{0+}$.
\end{assumption}
We denote by $\tilde \xi(t) = \bar \xi(t)- 
\xi$,  the 
estimation error at time $t$ for which it holds that $\tilde \xi(t) +\xi \in \Xi$. Given $\xi\in\Xi$, for 
shortness we 
write $\tilde \xi \in \tilde \Xi(\xi)$,  where $\tilde 
\Xi{(\xi)} = \{\tilde \xi\in\rr^\ell :\ \exists\bar \xi 
\in\Xi,\ \tilde\xi  =\bar \xi - \xi\}$ is the set of 
possible estimation error vectors. Assumption~\ref{ass:estimate} is what is required from the 
estimator for the development in this paper to hold. Some comments on how to design estimators 
that satisfy Assumption~\ref{ass:estimate} are given later, in Section~\ref{ssec:Estimator}
\brem
The trajectories produced by~\eqref{eq:pLDI} are a subset of those of the
 pLDI
 \beq\label{eq:pLDI1}
 x(t+1)\in\co\{A_ix(t)+Bu(t)\}_{i=1}^\ell. 
\eeq 
The pLDI~\eqref{eq:pLDI1} is equivalent to~\eqref{eq:pLDI} if a varying parameter vector, i.e.,  
$\bar \xi(t)$, is considered.
\erem

Consider the finite time optimal control problem
     \bsube\label{eq:FHOCparam}
          \beqar
            \VV_{\xi_t^N}^\MPC (x(t)) =  & \hspace*{-0.25cm} \min\limits_{U_t }& 
%\hspace*{-1cm}
            x_{N|t}'\PP(\xi_{N|t})x_{N|t}+\\ &&\qquad \sum_{k=0}^{N-1} 
x_{k|t}'Qx_{k|t}+u_{k|t}'Ru_{k|t}\quad \label{eq:FHOCparam-cost}\\
            & \hspace*{-0.15cm} {\rm s.t.}& \hspace*{-0.25cm}   x_{k+1|t}=\sum_{i=1}^{\ell 
}[  \xi_{k|t}]_iA_ix_{k|t}+Bu_{k|t} \quad \quad\label{eq:FHOCparam-dyn}\\
                             && \qquad u_{k|t}\in\UU,\ x_{k|t}\in\XX \label{eq:FHOCparam-constr} \\ 
%y_\rmin\leq 
%y_{k|t}\leq y_\rmax\\
                             &&\qquad (x_{k|t},u_{k|t})\in \CC_{xu} \label{eq:FHOCparam-RCI}\\
                             &&\qquad x_{N|t}\in \XX_N \label{eq:FHOCparam-termSet}\\
                             &&\qquad x_{0|t} = x(t),\label{eq:FHOCparam-init}
                            \eeqar
                            \esube 
where $N\in\rr_+$ is the prediction horizon, $Q\in\rr^{n\times n}$, $R\in\rr^{m\times m}$, 
$Q,R>0$, 
$\PP(\xi)\in\rr^{n\times n}$, $\PP(\xi)>0$, for all $\xi \in \Xi$, $\CC_{x,u}\subseteq 
\XX\times\UU$,  $U_t 
=[u_{0|t}\ 
\ldots u_{N-1|t}]$ is the sequence of control inputs along the prediction horizon, and $\xi_t^N 
=[\xi_{0|t}\ 
\ldots \xi_{N|t}]\in\Xi^{N+1}$ is a sequence of predicted parameters,  not 
necessarily
constant. Let $U_t^* 
=[u_{0|t}^*\ 
\ldots u_{N-1|t}^*]$ be the solution of~\eqref{eq:FHOCparam} at $t\in\zz_{0+}$.

\begin{problem}\label{prob:ampc}
Given~\eqref{eq:system} and an estimator producing the sequence of estimates 
$\{\xi(t)\}_t$ 
such that $\xi(t)\in\Xi$ for all $t\in\zz_{0+}$ according to Assumption~\ref{ass:estimate}, design 
the  
sequence of  predicted convex combination vectors $\xi_t^N $, the terminal cost 
$\PP(\xi)$, 
the robust terminal set 
$\XX_N$, and  the robust constraint set $\CC_{xu}$ in~\eqref{eq:FHOCparam} so that the 
IAMPC 
controller 
that at 
any $t\in\zz_{0+}$ solves~\eqref{eq:FHOCparam} and applies $u(t)=u_{0|t}^*$ 
achieves: $(i)$ 
ISS of the 
closed-loop 
with respect to $\tilde \xi_{0|t} =\bar \xi -  \xi_{0|t}$, $(ii)$ robust satisfaction of the 
constraints including when $\tilde \xi_{0|t} \neq 0$, $(iii)$  
guaranteed 
convergence of the  runtime numerical algorithms and computational load comparable 
to a (non-adaptive) MPC.
\end{problem}

In Problem~\ref{prob:ampc}, $(i)$ is concerned with conditions on the 
behavior during the estimation 
transient and when the estimate has converged, $(ii)$ is concerned with the system safety in 
terms of constraints satisfaction, and $(iii)$ is concerned with computational requirements, 
especially due to recent 
applications to fast 
systems~\cite{MPCautoSurvey12,dicairano12}. 

Problem~\ref{prob:ampc} requires robust constraint satisfaction (as 
in~\cite{KBM96,cuzzola02,tubeMPC}) and  ISS, i.e.,  a proportional 
effect of the estimation error on the closed-loop Lyapunov function.  
 The rationale for seeking ISS rather than robust stability as 
 in~\cite{KBM96,cuzzola02} is that,  when  the unknown parameters  do not change or 
change 
 slowly, a ``well designed''
estimator should 
eventually converge to correct value, and hence, by ISS definition,  the 
closed-loop becomes AS. However, ISS holds regardless of the estimator 
convergence as long as Assumption~\ref{ass:estimate} is satisfied, as well as
robust constraint satisfaction, which has to be guaranteed also in presence of estimation error and 
thus is guaranteed even if the parameters 
change. While this paper focuses on a control design independent from the estimator design,  the 
dependency of the closed-loop performance on estimation error is captured by the expansion term 
in the ISS Lyapunov function.

%While aimed at a different objective, robust MPC  (e.g.,~\cite{KBM96,cuzzola02,tubeMPC})
%provides  solid starting points  for solving
%Problem~\ref{prob:ampc}.  
Consider the linear parameter-varying (LPV) system
\beq\label{eq:LPV} 
x(t+1)=\sum_{i=1}^\ell [\xi(t)]_iA_ix(t)+Bu(t),
\eeq 
where for all $t\in\zz_+$, $\xi(t)\in\Xi$,  the parameter-dependent (linear) 
control law
\beq\label{eq:parTermCtrl}
  u=\kappa(\xi)x= \left( \sum_{i=1}^\ell [\xi]_iK_i \right) x,
\eeq
and the parameter-dependent (quadratic)   function  
\beq\label{eq:parLF} 
\VV_\xi(x) = x'\PP(\xi)x = x'\left( \sum_{i=1}^\ell [\xi]_iP_i \right) x ,
\eeq
where $P_i>0$, $i\in \zz_{[1,\ell]}$.

\begin{definition}[\cite{daafouz01}]\label{res:pLFcomp}
A function~\eqref{eq:parLF} 
such 
that  
%\beq%\label{eq:pLFdef} 
$\VV_{\xi(t+1)}(x(t+1))-\VV_{\xi(t)}(x(t))\leq 0$,
 for all $\xi(t),\xi(t+1)\in\Xi$,
where equality holds only if $x=0$, is a {\em parameter-dependent Lyapunov function} 
for~\eqref{eq:LPV} in closed-loop with \eqref{eq:parTermCtrl}.  \cvd
\end{definition}

By~\cite{daafouz01,cuzzola02,dic15}, given $Q\in\rr^{n\times n}$, $R\in\rr^{m\times m}$
$Q,R>0$, any solution $G_i,S_i\in\rr^{n\times n}$, $S_i>0$,  $E_i\in\rr^{m\times n}$, $i\in 
\zz_{[1,\ell]}$, of
\algn{\label{eq:pLF-designLMI}
 \bsm
   G_i + G_i' - S_i   	& 	(A_iG_i+BE_i)' & E_i' 		& G_i' \\
   (A_iG_i+BE_i) 	  & S_j 			& 0 		 & 0 \\
   E _i& 0 & R^{-1} & 0 \\ 
   G_i & 0 &0 & Q^{-1}
   \esm > 0,\ \forall i,j\in \zz_{[1,\ell]}.
   }
is such that \eqref{eq:parTermCtrl}, \eqref{eq:parLF}  where $P_i=S_i^{-1}$, 
$K_i=E_iG_i^{-1}$,  $i\in \zz_{[1,\ell]}$,
satisfy 
\algn{ &\VV(x(t+1),\xi(t+1))-\VV(x(t),\xi(t))
 \leq \label{eq:pLFdef} \\ &-x(t)'(Q+\kappa(\xi(t))'R\kappa(\xi(t)))x(t)',\ 
 \forall \xi(t),\xi(t+1)\in\Xi \no}
  for the closed-loop~\eqref{eq:LPV}, \eqref{eq:parTermCtrl}. 
 \begin{assumption}\label{ass:LMIfeasible}
 For the given  $A_i$, $i\in\zint{1,\ell}$, $B$, $Q$, $R$, \eqref{eq:pLF-designLMI} admits a 
feasible solution
 \end{assumption}
 The LMI~\eqref{eq:pLF-designLMI} is a relaxation of those in~\cite{KBM96,cuzzola02} since it 
allows for a  parameter-dependent Lyapunov function and a parameter-dependent linear control 
law. Thus, Assumption~\ref{ass:LMIfeasible} is  more relaxed of and implied by the existence of 
an 
(unconstrained) stabilizing linear control law for the uncertain system~\eqref{eq:pLDI} , see, 
e.g.,~\cite{KBM96,cuzzola02}. Indeed, if the vertex systems are such that the
uncertainty is too large, \eqref{eq:pLF-designLMI}~may be infeasible, similarly to the case where 
a stabilizing controller for an uncertain system does not exist. However, 
since~\eqref{eq:pLF-designLMI} is used here for design, such situation will be recognized before 
controller execution and corrective measures, such as improving the engineering of the plant or 
resorting to other control techniques can be actuated. By
using~\eqref{eq:pLF-designLMI} only for design, as opposed to~\cite{KBM96,cuzzola02}, the 
proposed method solves online only QPs, which makes it feasible also for  
applications with fast dynamics and low-cost microcontrollers~\cite{dicairano12,MPCautoSurvey12}.
\brem
Here we 
consider the case where $B$ in~\eqref{eq:LPV} is independent of the uncertain parameters due to 
the limited space, 
as this allows to shorten several derivations. The expanded derivations related to the case of 
uncertain $B$ will be included in future/extended versions of this work.
\erem
\sdcRemoved{In~\cite{dic15}, by solving~\eqref{eq:pLF-designLMI}, 
 $\PP(\xi)$ and $\XX_N$ in~\eqref{eq:FHOCparam} are  designed such that for all constant 
and 
known values 
$\xi\in \Xi$, when $\tilde 
\xi=0$   the closed-loop is AS. Thus, the MPC controller can be deployed before the exact 
parameter values are known, and, after  open-loop identification is done, the controller is 
adjusted  without the need of  re-design algorithms. Here, we build upon   this method to 
solve 
Problem~\ref{prob:ampc}, where the parameter estimation and the MPC operates 
concurrently, 
thus obtaining an indirect-adaptive MPC strategy.}

%
%
%
%\benum
% \item Standard defs: PI, LF, \ldots
% \item pLDI (link with LPV)
%  \item pLF and design
%  \item adaptive MPC problem (maybe this goes before the previous point, \ldots)
%  \item RCI (maybe this is better in the constraint part)
%\eenum 
%
%

\section{Unconstrained IAMPC: ISS Property}\label{sec:unconstr}
We start from the unconstrained case, $\XX= \rr^n$, $\UU= 
\rr^m$. 

\subsection{Stability with parameter prediction along the horizon}

We begin by considering a simpler case where $\xi_{k|t}= \bar \xi(t+k)$, 
$k\in\zint{0,N}$, where it is possible that $\xi_{k_1|t}\neq \xi_{k_2|t}$, for 
$k_1,k_2\in\zint{0,N}$. This corresponds to controlling an LPV system with preview on the 
parameters for $N$ steps in the future, but no information afterwards.

\begin{lemma}\label{lem:LPVstable}
Let Assumption~\ref{ass:LMIfeasible} hold and consider~\eqref{eq:LPV} and the MPC that at 
$t\in\zz_{0+}$ 
solves~\eqref{eq:FHOCparam} where $\XX_N=\XX= \rr^n$, $\UU= \rr^m$, 
$\CC_{xu}= 
\rr^{n+m}$, $\UU= \rr^m$, $  \xi_{k|t}= \bar \xi(t+k)$, and $\PP(\xi)$, $\kappa(\xi)$ are 
from~\eqref{eq:pLF-designLMI}. Then, the origin is AS for the closed loop with domain of 
attraction 
$\rr^n$ for 
every sequence $\{\bar \xi(t)\}_{t}$, such that $\bar \xi(t)\in\Xi$, for all $t\in\rr_{0+}$.
\end{lemma}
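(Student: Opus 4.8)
The plan is to establish asymptotic stability via the standard MPC argument: exhibit the optimal cost $\VV_{\xi_t^N}^\MPC(x(t))$ as a Lyapunov function for the closed loop, using the parameter-dependent terminal ingredients from~\eqref{eq:pLF-designLMI} to get the necessary decrease. Since this is the unconstrained case with $\XX_N=\XX=\rr^n$ and exact parameter preview $\xi_{k|t}=\bar\xi(t+k)$, recursive feasibility is automatic (every candidate input sequence is admissible), so the whole burden falls on the descent inequality.

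First I would construct a feasible-but-suboptimal control sequence at time $t+1$ from the optimizer $U_t^*$ at time $t$. The candidate is the usual shift-and-append: reuse $u_{1|t}^*,\dots,u_{N-1|t}^*$ and append the terminal control $\kappa(\bar\xi(t+N))x_{N|t}^*$ evaluated on the terminal state. Because the preview is exact, the predicted trajectory under this candidate at time $t+1$ coincides, for the first $N-1$ steps, with the tail of the optimal trajectory at time $t$; the only new quantities are the propagated terminal state $x_{N+1|t}$ under the terminal law and its stage/terminal costs.

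Next I would bound the cost of this candidate. The difference between the candidate cost at $t+1$ and the optimal cost at $t$ telescopes to
\begin{align}
\VV^{\mathrm{cand}}(x(t+1)) - \VV_{\xi_t^N}^\MPC(x(t)) &= -x(t)'Qx(t) - u_{0|t}^{*\prime}Ru_{0|t}^* \nonumber\\
&\quad + \Big[\VV_{\bar\xi(t+N+1)}(x_{N+1|t}) - \VV_{\bar\xi(t+N)}(x_{N|t}^*)\Big] \nonumber\\
&\quad + x_{N|t}^{*\prime}\big(Q+\kappa(\bar\xi(t+N))'R\kappa(\bar\xi(t+N))\big)x_{N|t}^*.
\end{align}
The key step is that the bracketed terminal-cost drop plus the added terminal stage cost is nonpositive: this is exactly the content of the descent inequality~\eqref{eq:pLFdef} guaranteed by~\eqref{eq:pLF-designLMI}, applied with $\xi(t)=\bar\xi(t+N)$ and $\xi(t+1)=\bar\xi(t+N+1)$, both in $\Xi$. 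Invoking optimality, $\VV_{\xi_{t+1}^N}^\MPC(x(t+1))\le\VV^{\mathrm{cand}}(x(t+1))$, which yields $\VV^\MPC(x(t+1))-\VV^\MPC(x(t))\le -x(t)'Qx(t)\le-\lambda_\rmin(Q)\|x(t)\|^2$.

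Finally I would check the remaining Lyapunov hypotheses to invoke the first part of Result~1. The value function is positive definite and admits class-$\KK_\infty$ lower and upper bounds: the lower bound follows from $Q>0$ in the first stage cost, and a quadratic upper bound $\alpha_2(\|x\|)$ follows because in the unconstrained LPV setting the terminal control law itself provides an admissible sequence with cost bounded by a parameter-independent quadratic in $\|x(t)\|$ (taking a uniform bound over the finitely many vertices suffices). Because these bounds hold uniformly over all parameter sequences in $\Xi^{N+1}$, the conclusion holds for \emph{every} sequence $\{\bar\xi(t)\}_t$. The main obstacle I anticipate is being careful that the descent inequality~\eqref{eq:pLFdef} is applied at the correct horizon index and for arbitrary $\xi(t),\xi(t+1)\in\Xi$ — this is precisely why the $j$-index quantifier ``$\forall i,j$'' in the LMI~\eqref{eq:pLF-designLMI} matters, since $\bar\xi(t+N+1)$ is a priori unrelated to $\bar\xi(t+N)$, and the telescoping relies on the exact-preview assumption making the candidate trajectory agree with the shifted optimal one.
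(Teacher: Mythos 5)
Your proposal is correct and follows essentially the same route as the paper's proof: the shift-and-append candidate built from the tail of $U_t^*$ plus the terminal law $\kappa(\bar\xi(t+N))x_{N|t}^*$, the telescoping of the candidate cost, cancellation of the terminal-cost increase against the added terminal stage cost via~\eqref{eq:pLFdef} (valid for arbitrary successive parameters thanks to the $\forall i,j$ quantifier in~\eqref{eq:pLF-designLMI}), and the standard quadratic upper/lower bounds on the value function. Your write-up is in fact slightly more explicit than the paper's about the telescoping identity, but the argument is the same.
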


\begin{proof}
We follow the proofs for unconstrained MPC extended to time-varying systems, 
see,~\cite[Sec.2.4]{RM09}. 
For~\eqref{eq:LPV} in closed-loop with~\eqref{eq:parTermCtrl} designed 
by~\eqref{eq:pLF-designLMI}, \eqref{eq:pLFdef} holds, where $\VV$ in~\eqref{eq:parLF} is 
designed also 
by~\eqref{eq:pLF-designLMI}. 
Thus, in~\eqref{eq:FHOCparam}, $\VV_{\xi^N}^\MPC(x)$, is
lower and 
upper 
bounded by 
class $\KK_\infty$ functions, $\unl\alpha(\|x(t)\|)=\lambda_\rmin(Q)\|x(t)\|^2$, 
$\ovl\alpha(\|x(t)\|)=\psi \upsilon \|x(t)\|^2$, where $\psi\in\rr_+$ and $\upsilon =
\max_{i\in\zint{1,\ell}}\lambda_\rmax{(P_i)}$, for any $\xi^N_t\in\Xi^{N+1}$, 
see~\cite[Sec. 2.4.5]{RM09}. From $x(t)$,  let 
$U^*(t)=[u^*_{0|t}\ldots u^*_{N-1|t}]$ 
be the optimal solution 
of~\eqref{eq:FHOCparam}. 
At 
$t+1$ from 
$x(t+1)=x_{1|t}$,   $[ \tilde u_{0|t+1}\ldots \tilde u_{N-1|t+1}]$ where 
$\tilde  u_{k|t+1} 
= 
u^*_{k+1|t}$ 
for $k\in\zz_{[0,N-2]}$, $ \tilde u_{N-1|t+1}=\sum_{i=1}^\ell  [\xi_{N-1|t+1}]_i K_i$,  
has cost 
$\tilde J\leq 
\VV_{\xi_{t}^N}^\MPC(x(t))-x(t)'Qx(t)$, due to~\eqref{eq:pLFdef}  and  
$\xi_{k|t+1}=\xi_{k+1|t}$, for all $k\in\zint{0,N-1}$. Since 
$\VV_{\xi_{t+1}^N}^\MPC(x(t+1))\leq 
\tilde J$, 
\algn{ \no 
&\VV_{\xi_{t+1}^N}^\MPC(x(t+1))-\VV_{\xi_{t }^N}^\MPC(x(t)) \leq 
-x(t)Qx(t)\\ \no & \qquad \qquad \leq - 
\lambda_\rmin(Q)\|x(t)\|^2 = 
\alpha_\Delta(\|x(t)\|),} and $\alpha_\Delta\in\KK_\infty$. Thus, 
$\VV_{\xi_t^N}^\MPC(x(t))$ is a Lyapunov 
function for the closed-loop system for any $\xi_t^N$ such that $\xi_{t|k}=\bar \xi(t+k)$, 
and hence the origin is AS in $\rr^n$.
\end{proof}

By Lemma~\ref{lem:LPVstable}, the MPC based on~\eqref{eq:FHOCparam} with  perfect preview 
along the horizon is
stabilizing. 
Next we  account for the effect of 
the parameter estimation error.

\subsection{ISS with respect to parameter estimation error}

Consider now the case relevant to Problem~\ref{prob:ampc} where $\bar \xi(t)$ is 
constant, i.e., 
$\bar \xi(t)=\bar \xi$, for all $t\in\zz_{0+}$, unknown, and being estimated. Thus,  
$\tilde \xi_{0|t}=\bar \xi - \xi_{0|t}$ is the error in the parameter 
estimate, which may be time-varying, and  $\tilde \xi_{0|t}\in\Xi{(\xi_{0|t})}$. The parameter 
estimation error induces a state prediction error 
 \algn{ \label{eq:pLDIerr} 
   \eps_x &= \sum_{i=1}^{\ell }[\bar \xi]_iA_ix -\sum_{i=1}^{\ell }[ \xi_{0|t}]_iA_ix = 
\sum_{i=1}^{\ell 
}[\tilde \xi_{0|t}]_iA_ix .}
Indeed,
\algn{ \no \|\eps_x\| &=  \left\|\sum_{i=1}^{\ell }[\tilde \xi_{0|t}]_iA_ix\right\| \leq  
\left\|\sum_{i=1}^{\ell 
}[\tilde 
\xi_{0|t}]_iA_i\right\|\cdot  \left\|x\right\| \\ &  \leq \left(  \sum_{i=1}^{\ell 
} | [\tilde 
\xi_{0|t}]_i|\left\|A_i\right\|\right)  \left\|x\right\|\leq \gamma_A \|\tilde \xi_{0|t}\|_1 \|x\|}
where $\gamma_A = \max_{i=1,\ldots \ell }\|A_i\|$. 

Consider  the value function $\VV^\MPC_{\xi^N}$ of~\eqref{eq:FHOCparam}, the following result 
is straightforward from~\cite{RM09}.  
\begin{result}\label{res:LipCont}
For every compact $\XX_L\subseteq \rr^n$, the value 
function of~\eqref{eq:FHOCparam}, where $\PP(\xi)$ is designed according 
to~\eqref{eq:pLF-designLMI}, is 
Lipschitz-continuous in $x\in\XX_L$, that is, there exists $L\in\rr_+$ such that for every $  
x_1,x_2\in\XX_L$, 
$\|\VV_{\xi^N}^\MPC(x_1)-\VV_{\xi^N}^\MPC(x_2)\|\leq L \|x_1-x_2\|$,
for every ${\xi^N}\in\Xi^{N+1}$.
\end{result}
Result~\ref{res:LipCont} follows directly from the fact that for every 
${\xi^N}\in\Xi^{N+1}$, $\VV_{\xi^N}^\MPC$ is
piecewise quadratic~\cite{RM09} and hence it is Lipschitz continuous in any compact set $\XX_L$. 
Thus, for any 
$\XX_L\subseteq \rr^n$ and ${\xi^N}\in\Xi^{N+1}$, there exists a Lipschitz parameter 
$L_{\xi^N}\in\rr_+$.  Since $\Xi^{N+1}$ is compact, i.e., closed and bounded, there exists a 
maximum  of $L_{\xi^N}\in\rr_+$ for  $\xi^N\in\Xi^{N+1}$. Such maximum is the Lipschitz 
constant $L$.

\begin{lemma}\label{lem:ISSxw}
Let $\xi_{k-1|t+1}=\xi_{k|t}$, for all $k\in\zint{1,N}$, $t\in\zz_{0+}$. 
Then, there exists $\gamma_L>0$, such that for 
every $x\in\XX_L$, 
 \algn{ \label{eq:ISSxw}
\VV_{\xi_{t+1}^N}^\MPC(x(t+1)) \leq & \VV_{\xi_t^N}^\MPC(x(t)) -\lambda_\rmin(Q) 
\|x(t)\|^2 \no \\ &  \quad +  
\gamma_L \|\tilde 
\xi_{0|t}\|_1 \|x(t)\|.
 }
% Also, given ${\xi^N}\in\Xi^{N+1}$ 
% and $\eta>0$, let $\XX_{{\xi^N}}^\eta=\{x\in\rr^n:\VV_{\xi^N}(x)\leq \eta\}$ be such that 
%$\XX_{{\xi^N}}^\eta\subseteq\XX_L$, and 
% for all $x\in\XX_{{\xi^N}}^\eta$, $\VV_{\xi^N}^\MPC(x) -\lambda_\rmin(Q) 
% \|x\|^2+  
% 2\gamma_L \|x(t)\| \leq \eta$. Then, $\XX_{{\xi^N}}^\eta$ is RPI with respect to the 
%estimation 
%error $\xi(t)$.
\end{lemma}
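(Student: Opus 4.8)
The plan is to compare the true successor state with the \emph{nominal} one predicted inside~\eqref{eq:FHOCparam}, and to absorb their mismatch through the Lipschitz bound of Result~\ref{res:LipCont}. Let $u_{0|t}^*$ be the first optimal input and let $x_{1|t}=\sum_{i=1}^\ell[\xi_{0|t}]_iA_ix(t)+Bu_{0|t}^*$ be the one-step state predicted under the estimate $\xi_{0|t}$. The plant~\eqref{eq:LPV} instead evolves with $\bar\xi$, so the actual successor is $x(t+1)=\sum_{i=1}^\ell[\bar\xi]_iA_ix(t)+Bu_{0|t}^*=x_{1|t}+\eps_x$, where $\eps_x$ is exactly the prediction error~\eqref{eq:pLDIerr}, which satisfies $\|\eps_x\|\leq\gamma_A\|\tilde\xi_{0|t}\|_1\|x(t)\|$.

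First I would bound the nominal term $\VV_{\xi_{t+1}^N}^\MPC(x_{1|t})$, i.e., the cost attained at the predicted state $x_{1|t}$ with the shifted parameter sequence. Since $\xi_{k-1|t+1}=\xi_{k|t}$, the candidate input formed by shifting $U_t^*$ and appending the terminal law $\kappa(\cdot)$ from~\eqref{eq:parTermCtrl} is feasible from $x_{1|t}$ (all constraint sets are trivial in the unconstrained case), and the terminal decrease~\eqref{eq:pLFdef} controls its cost exactly as in the proof of Lemma~\ref{lem:LPVstable}. This yields
\[
\VV_{\xi_{t+1}^N}^\MPC(x_{1|t})\le \VV_{\xi_t^N}^\MPC(x(t))-x(t)'Qx(t)\le \VV_{\xi_t^N}^\MPC(x(t))-\lambda_\rmin(Q)\|x(t)\|^2 .
\]

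Next I would reintroduce the estimation error by writing $x(t+1)=x_{1|t}+\eps_x$ and invoking the uniform-in-$\xi^N$ Lipschitz estimate of Result~\ref{res:LipCont}, namely $\VV_{\xi_{t+1}^N}^\MPC(x(t+1))\le \VV_{\xi_{t+1}^N}^\MPC(x_{1|t})+L\|\eps_x\|$. Chaining this with the previous display and the bound on $\|\eps_x\|$, and setting $\gamma_L=L\gamma_A$, produces~\eqref{eq:ISSxw} directly.

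The main obstacle is the domain on which the Lipschitz estimate is legitimate: Result~\ref{res:LipCont} holds on a compact $\XX_L$, so I must ensure that both $x_{1|t}$ and $x(t+1)=x_{1|t}+\eps_x$ lie in an a priori chosen (possibly enlarged) $\XX_L$ before applying the bound between them, and that a single constant $L$ serves every shifted sequence $\xi_{t+1}^N\in\Xi^{N+1}$. The latter is exactly the uniform version of Result~\ref{res:LipCont}, which holds because $\Xi^{N+1}$ is compact; the former is unrestrictive here, since in the unconstrained setting $\XX_L$ may be taken large enough to contain the one-step reachable states from the region of interest.
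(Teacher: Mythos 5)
Your proposal is correct and follows essentially the same route as the paper's own proof: decompose the successor state as $x(t+1)=x_{1|t}+\eps_x$, obtain the nominal decrease $\VV_{\xi_{t+1}^N}^\MPC(x_{1|t})-\VV_{\xi_t^N}^\MPC(x(t))\leq-\lambda_\rmin(Q)\|x(t)\|^2$ from the shifted-sequence argument of Lemma~\ref{lem:LPVstable}, and absorb the mismatch via the Lipschitz bound of Result~\ref{res:LipCont} with $\gamma_L=L\gamma_A$. Your closing remark on the domain $\XX_L$ is a point the paper defers to the discussion surrounding Theorem~\ref{th:ISS}, where an RPI set $\XX_\eta\subseteq\XX_L$ is required, but it does not change the argument.
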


\begin{proof}
By Lipschitz continuity of $\VV_{\xi_t^N}^\MPC(x)$, 
$$\VV_{\xi_{t+1}^N}^\MPC(x(t+1))-\VV_{\xi_{t+1}^N}^\MPC(  x_{1|t})\leq  L\|\eps_x(t)\|\leq 
\gamma_A L\|\tilde\xi_{0|t}\|_1 
\|x\|$$
Also, due to the result of Lemma~\ref{lem:LPVstable},  
$$
\VV_{\xi_{t+1}^N}^\MPC(x_{1|t})-\VV_{\xi_{t}^N}^\MPC(x(t))\leq -\lambda_\rmin(Q) \|x\|^2
$$
Thus,
$$\VV_{\xi_{t+1}^N}^\MPC(x(t+1))\leq \VV_{\xi_{t}^N}^\MPC(x(t)) -\lambda_\rmin(Q) 
\|x\|^2 +   \gamma_L \|\tilde\xi_{0|t}\|_1 \|x\|  $$
where $\gamma_L=\gamma_A L$. 
\end{proof}

\begin{theorem}\label{th:ISS}
Let
Assumptions~\ref{ass:estimate}, \ref{ass:LMIfeasible} hold, and $\xi_{k|t+1}=\xi_{k+1|t}$, 
for 
all $k\in\zint{0,N-1}$, and all $t\in\zz_{0+}$. 
For the MPC that at any step solves~\eqref{eq:FHOCparam}  where $\PP(\xi)$ is 
designed 
according 
to~\eqref{eq:pLF-designLMI},  $\XX_N=\XX= \rr^n$, $\UU= \rr^m$, $\CC_{xu}= 
\rr^{n+m}$, $\UU= \rr^m$,  $\VV^\MPC_{\xi^N}(x)$ is an ISS-Lyapunov 
function with 
respect to the estimation error 
$\tilde \xi_{0|t}=\bar \xi-\xi_{0|t}\in \tilde \Xi(\xi_{0|t})$  for~\eqref{eq:system} in 
closed loop 
with the MPC 
based 
on~\eqref{eq:FHOCparam} in any  $\XX_\eta\subseteq \XX_L$, where $\XX_\eta$ is RPI with 
respect 
to 
$\tilde \xi_{0|t}$ for the closed loop.
\end{theorem}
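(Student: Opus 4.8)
The plan is to verify directly the two defining inequalities of an ISS-Lyapunov function for $\VV^\MPC_{\xi^N}$ on the invariant set $\XX_\eta$, and then invoke the ISS result of Section~\ref{sec:prelim}. The sandwich bounds are already in hand: the argument in the proof of Lemma~\ref{lem:LPVstable} shows that, uniformly over $\xi^N\in\Xi^{N+1}$, one has $\lambda_\rmin(Q)\|x\|^2 \le \VV^\MPC_{\xi^N}(x)\le \psi\upsilon\|x\|^2$, so I would take $\alpha_1(\|x\|)=\lambda_\rmin(Q)\|x\|^2$ and $\alpha_2(\|x\|)=\psi\upsilon\|x\|^2$, both in $\KK_\infty$.

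For the descent inequality I would start from Lemma~\ref{lem:ISSxw}, whose hypothesis $\xi_{k-1|t+1}=\xi_{k|t}$ is exactly the reindexing of the assumed update rule $\xi_{k|t+1}=\xi_{k+1|t}$. This yields, for $x\in\XX_\eta\subseteq\XX_L$,
\[
\VV^\MPC_{\xi_{t+1}^N}(x(t+1)) \le \VV^\MPC_{\xi_t^N}(x(t)) - \lambda_\rmin(Q)\|x(t)\|^2 + \gamma_L\|\tilde\xi_{0|t}\|_1\|x(t)\|,
\]
whose last term mixes the state and the disturbance $\tilde\xi_{0|t}$. The only real step is to separate these by Young's inequality,
\[
\gamma_L\|\tilde\xi_{0|t}\|_1\|x(t)\| \le \tfrac{1}{2}\lambda_\rmin(Q)\|x(t)\|^2 + \tfrac{\gamma_L^2}{2\lambda_\rmin(Q)}\|\tilde\xi_{0|t}\|_1^2,
\]
so that the descent takes the ISS form with $\alpha_\Delta(\|x\|)=\tfrac{1}{2}\lambda_\rmin(Q)\|x\|^2\in\KK_\infty$ and $\gamma(s)=\tfrac{\gamma_L^2}{2\lambda_\rmin(Q)}s^2\in\KK$ (using equivalence of norms in $\rr^\ell$ so that $\|\tilde\xi_{0|t}\|_1$ may be read as a function of $\|\tilde\xi_{0|t}\|$).

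It then remains only to collect the three conditions and apply the ISS result, concluding that the origin is ISS with respect to $\tilde\xi_{0|t}$ with domain of attraction $\XX_\eta$. The step I expect to require the most care is not the algebra but the domain bookkeeping: the quadratic sandwich bounds hold globally, whereas the descent of Lemma~\ref{lem:ISSxw} relies on the Lipschitz constant $L$ of Result~\ref{res:LipCont}, which exists only on a compact set $\XX_L$. The hypothesis that $\XX_\eta\subseteq\XX_L$ is RPI for the closed loop under all admissible errors $\tilde\xi_{0|t}\in\tilde\Xi(\xi_{0|t})$ is precisely what guarantees the trajectory never leaves $\XX_L$, so that the per-step inequality is valid at every $t$ and the ISS-Lyapunov certificate is genuinely propagated along the closed-loop evolution rather than holding only for a single step.
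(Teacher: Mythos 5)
Your proposal is correct and follows the same skeleton as the paper's proof (sandwich bounds inherited from Lemma~\ref{lem:LPVstable}, the one-step descent of Lemma~\ref{lem:ISSxw}, norm equivalence on $\tilde\xi_{0|t}$, and the observation that the RPI hypothesis on $\XX_\eta\subseteq\XX_L$ is what keeps the Lipschitz bound valid along the trajectory). The one place you diverge is in how the cross term $\gamma_L\|\tilde\xi_{0|t}\|_1\|x(t)\|$ is separated into a state part and a disturbance part. The paper uses compactness of $\XX_\eta$ to bound $\|x\|\leq\gamma_\eta$ and absorbs the state factor into the ISS gain, obtaining the full decrease $-\lambda_\rmin(Q)\|x\|^2$ and a \emph{linear} gain $\gamma_{\rm ISS}\|\tilde\xi_{0|t}\|$ with $\gamma_{\rm ISS}=\gamma_L\gamma_p\gamma_\eta$ proportional to the diameter of the invariant set. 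You instead apply Young's inequality, which halves the decrease rate to $\tfrac{1}{2}\lambda_\rmin(Q)\|x\|^2$ but yields a \emph{quadratic} gain $\tfrac{\gamma_L^2}{2\lambda_\rmin(Q)}\|\tilde\xi_{0|t}\|_1^2$ that is independent of the size of $\XX_\eta$. Both are valid ISS-Lyapunov certificates; yours is marginally more self-contained (it does not need the constant $\gamma_\eta$ at that step, only the compactness of $\XX_L$ already required for the Lipschitz constant $L$), while the paper's gives a gain that vanishes linearly rather than quadratically in the estimation error, which is the form quoted in~\eqref{eq:issLF} and reused in the corollary on non-unique $\bar\xi$. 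Your closing remark on the domain bookkeeping is exactly the right concern and matches the role the RPI hypothesis plays in the paper.
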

\begin{proof}
By Lemma~\ref{lem:ISSxw} we have that
$$\VV_{\xi_{t+1}^N}^\MPC(x(t+1))\leq \VV_{\xi_{t }^N}^\MPC(x(t)) -\lambda_\rmin(Q) 
\|x\|^2 +   \gamma_L \|\tilde \xi_{0|t}\|_1 \|x\| .$$

Due to the norm equivalence in finite 
dimensional spaces, for a $p$-norm, there exists $\gamma_p$ 
such that $\|\tilde \xi_{0|t}\|_1\leq \gamma_p\|\tilde \xi_{0|t}\|$ for every $\tilde 
\xi_{0|t}\in\rr^n$. Hence, 
$$\VV_{\xi_{t+1}^N}^\MPC(x(t+1))\leq \VV_{\xi_{t }^N}^\MPC(x(t)) -\lambda_\rmin(Q) 
\|x\|^2 +   \gamma_L\gamma_p \|\tilde \xi_{0|t}\| \|x\| .$$

Since $\XX_\eta\subseteq \XX_L$ and $\XX_L$ is compact,
there exists a finite $\gamma_\eta>0 $ such that $\|x\|\leq \gamma_\eta$ for all $x\in\XX_\eta$. 
Then, for 
all $x\in\XX_\eta$, where $\XX_\eta$ is RPI with respect to $\tilde \xi_{0|t}\in \tilde 
\Xi(\xi_{0|t})$,
\beq\label{eq:issLF} \VV_{\xi_{t+1}^N}^\MPC(x(t+1))\leq \VV_{\xi_{t }^N}^\MPC(x(t)) 
-\lambda_\rmin(Q) 
\|x(t)\|^2 +   \gamma_{\rm ISS} \|\tilde \xi_{0|t}\|\eeq
and the closed-loop is ISS with respect to $\tilde  \xi_{0|t}$, with ISS Lyapunov 
function 
$\VV_{\xi_{t}^N}^{\rm MPC}$ and ISS gain $\gamma_{\rm 
ISS}=\gamma_L\gamma_p\gamma_\eta$.
\end{proof}

 Theorem~\ref{th:ISS}  requires the existence of a RPI set $\XX_\eta$ contained in $\XX_L$ 
because that 
is where the value function is Lipschitz-continuous. For $\eta>0$,  
$\XX_\eta\subseteq \XX_L$ such that for all $\xi^N\in\Xi^{N+1}$, $\VV_{\xi^N}^\MPC\leq 
\eta$ 
and 
for all 
$x\in\XX_\eta$, $\VV_{\xi^N}^\MPC-\lambda_\rmin(Q)\|x\|^2\leq \eta - 
2\gamma_L\|x\|$ is RPI, 
because of Lemma~\ref{lem:ISSxw} and 
$\|\tilde \xi\|_1\leq 2$, for all $\tilde \xi\in\tilde \Xi(\xi)$. For the case of 
constrained IAMPC where $\XX$ is compact, for ensuring constraint satisfaction we need to 
construct a compact RPI set 
$\XX_\eta\subseteq\XX$, so that we can define $\XX_L= \XX_\eta$, and 
Result~\ref{res:LipCont} holds in $\XX_\eta$. 
Next, we focus on the case of  $\XX$, $\UU$  
compact and how to build $\XX_\eta$  for constrained IAMPC. 
\sdcRemoved{
\brem A proof for Theorem~\ref{th:ISS} can be obtained assuming {\em 
uniform continuity}, instead of  
Assumption~\ref{ass:LipCont}, following~\cite[Th.2]{LimonISS}, where it is pointed 
out~\cite[A.4.3]{LimonISS} that the two assumptions are equivalent 
for~\eqref{eq:FHOCparam}, given $\xi_t^N\in\Xi_t^{N+1}$.
\erem
}

\section{Constrained IAMPC: Robust Constraints}\label{sec:constr}
By designing the terminal cost from a pLF as in Section~\ref{sec:unconstr}, the 
closed loop of~\eqref{eq:system} with the unconstrained IAMPC that 
solves~\eqref{eq:FHOCparam} 
is ISS with 
respect to $\tilde \xi_{0|t}$. 
Next, we consider constrained IAMPC, i.e., $ \XX\times\UU\subset \rr^n\times \rr^m$.
\begin{assumption}\label{ass:compact}
$\XX$, $\UU$ are compact sets, $0\in\intset(\XX)$, $0\in\intset(\UU)$.
\end{assumption} 

\begin{comment}
Assumption~\ref{ass:compact} can be relaxed, since we do not need both sets to be 
compact, but rather a particular combination of them. Here, we make
Assumption~\ref{ass:compact} for the simplicity of exposition.
\end{comment}

Under Assumption~\ref{ass:compact}, we first show that for the LPV 
system~\eqref{eq:LPV} with perfect preview along the prediction horizon, i.e.,  $\xi_{k|t} 
=\bar \xi(t+k)$, for all $k\in\zint{0,N}$, the closed-loop recursively satisfies the 
constraints and is AS. 
Then, we enforce constraint satisfaction  when $\tilde \xi_{0|t}\neq 
0$. These objectives are achieved by designing~$\XX_N$ and $\CC_{xu}$ 
in~\eqref{eq:FHOCparam}, 
respectively.

\subsection{Terminal set design for nominal terminal constraint}

Consider~\eqref{eq:LPV} where $\xi(t)$ is known at  $t\in\zz_{0+}$ 
and the control law~\eqref{eq:parTermCtrl} resulting in the
closed-loop LPV system 
\beq\label{eq:LPVcl} 
x(t+1)=\sum_{i=1}^\ell [\xi(t)]_i(A_i+BK_i)x(t).
\eeq 
The trajectories of~\eqref{eq:LPVcl} are contained in those of the pLDI
\beq\label{eq:pLDIcl} 
x(t+1)\in \co\{(A_i+BK_i)x(t)\}_{i=1}^\ell.
\eeq 

For \eqref{eq:pLDIcl} in closed loop with~\eqref{eq:parTermCtrl} designed 
by~\eqref{eq:pLF-designLMI}
subject to~\eqref{eq:sysConstr},
in~\cite{dic15} it was shown that the maximum 
constraint admissible set $\XX^\infty\subseteq \bar 
\XX$, where $\bar \XX=\{x\in\XX:\ 
\kappa(\xi)x\in\UU, \forall \xi\in\Xi\}$ 
is 
polyhedral, finitely determined and  has non-empty interior with $0\in 
\intset(\XX^\infty)$. 
 $\XX^\infty$ is RPI  for~\eqref{eq:LPVcl} for all $\xi\in\Xi$, and is  the limit of a sequence of 
backward reachable sets. Let 
$\XX_{x u}$ be a given set of feasible states and inputs  $\XX_{x u}\subseteq 
\XX\times\UU$, $0\in\intset(\XX_{xu})$ and let 
\algn{\XX^{(0)} &= \{x:\ (x,K_ix)\in\XX_{xu},\ \forall i\in 
\zz_{[1,\ell]}\}\no \\
\XX^{(h+1)} &= \{x:\ 
(A_i+BK_i)x\in\XX^{(h)},\ 
\forall 
i\in \zz_{[1,\ell]}\}\cap \XX^{(h)}\nonumber \\
\XX^{\infty} &= \lim_{h\rar\infty}\XX^{(h)}. \label{eq:MCAS}
}
Due to the finite 
determination of $\XX^\infty$    there exists a finite $\bar h\in\zz_{0+}$ 
such that 
$\XX^{(\bar h+1)}=\XX^{(\bar 
h)}=\XX^\infty$, i.e., the limit in~\eqref{eq:MCAS} is reached in a finite 
number of iterations.

\begin{lemma}\label{lem:ConstrLPVstable}
Consider~\eqref{eq:LPV} and the MPC that at $t\in\zz_{0+}$ 
solves~\eqref{eq:FHOCparam} where $\XX\subset \rr^n$, $\UU\subset \rr^m$, $\CC_{xu}= 
\rr^{n+m}$, $\xi_{k|t}= \bar \xi(t+k)$, $\PP(\xi)$, $\kappa(\xi)$ are designed according 
to~\eqref{eq:pLF-designLMI} and $\XX_N=\XX^{\infty}$, where $\XX^\infty$ is 
from~\eqref{eq:MCAS}. At a given $t\in\zz_{0+}$, let $x({ t})\in\XX$, 
$\xi_t^N \in\Xi^{N+1}$ be  such that \eqref{eq:FHOCparam} is feasible. Then, 
\eqref{eq:FHOCparam} is feasible for any $\tau \geq  t$, i.e., $\XX_f(\xi^N)=\{x\in\XX: 
\eqref{eq:FHOCparam}\ {\rm feasible\ for\ } x_{0|t}=x,\ 
\xi_{k|t}=\xi_k\in\Xi,\ k\in\zint{0,N}\}$ is a PI set, and the 
origin is AS in $\XX_f(\xi^N)$.
\end{lemma}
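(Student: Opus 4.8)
The plan is to run the standard recursive-feasibility-plus-Lyapunov-decrease argument for MPC, specialized to the exact-preview LPV setting with the terminal ingredients $\XX_N=\XX^\infty$ and $\PP(\xi),\kappa(\xi)$ from \eqref{eq:pLF-designLMI}. Both assertions---positive invariance of $\XX_f(\xi^N)$ and asymptotic stability of the origin---follow from a single candidate solution at $t+1$ obtained by shifting the optimizer at $t$ and appending the terminal control law.

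First I would prove recursive feasibility. Let \eqref{eq:FHOCparam} be feasible at $t$ with optimizer $U_t^*=[u^*_{0|t}\ \ldots\ u^*_{N-1|t}]$ and states $x^*_{0|t},\ldots,x^*_{N|t}$; since the preview is exact and $u(t)=u^*_{0|t}$ is applied, $x(t+1)=x^*_{1|t}$. At $t+1$ I propose $\tilde u_{k|t+1}=u^*_{k+1|t}$ for $k\in\zint{0,N-2}$ and $\tilde u_{N-1|t+1}=\kappa(\xi_{N-1|t+1})\tilde x_{N-1|t+1}$. Because $\xi_{k|t+1}=\xi_{k+1|t}$, the candidate reproduces the tail of the optimal trajectory, $\tilde x_{k|t+1}=x^*_{k+1|t}$ for $k\in\zint{0,N-1}$, so the shifted state and input constraints are inherited from feasibility at $t$ and $\tilde x_{N-1|t+1}=x^*_{N|t}\in\XX^\infty$ by \eqref{eq:FHOCparam-termSet}. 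The appended input is admissible because $\XX^\infty\subseteq\bar\XX=\{x\in\XX:\kappa(\xi)x\in\UU,\ \forall\xi\in\Xi\}$, and the successor terminal state $\tilde x_{N|t+1}=\sum_{i=1}^\ell[\xi_{N-1|t+1}]_i(A_i+BK_i)x^*_{N|t}$ remains in $\XX^\infty$: the construction \eqref{eq:MCAS} gives $(A_i+BK_i)x^*_{N|t}\in\XX^\infty$ for every vertex $i$, and convexity of the polyhedron $\XX^\infty$ promotes this to the convex combination for every $\xi\in\Xi$, i.e., to robust invariance of $\XX^\infty$ for \eqref{eq:pLDIcl}. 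Hence the candidate is feasible; by induction \eqref{eq:FHOCparam} is feasible for all $\tau\geq t$.

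Next I would reuse the candidate to bound the value function. Writing its cost $\tilde J$ and cancelling the common running-cost stages $1,\ldots,N-1$ against $\VV^\MPC_{\xi_t^N}(x(t))$ leaves the stage-$0$ running term and a stage-$N$ boundary term. The crucial alignment is $\xi_{N-1|t+1}=\xi_{N|t}$, so the terminal penalty $\PP(\xi_{N|t})$ charged at $t$ is taken at the ``current'' parameter of the pLF step; invoking \eqref{eq:pLFdef} with current $\xi_{N|t}$ and successor $\xi_{N|t+1}$ cancels the stage-$N$ running cost and the new terminal cost against the old terminal cost $\PP(\xi_{N|t})$ evaluated at $x^*_{N|t}$, so that $\tilde J\leq\VV^\MPC_{\xi_t^N}(x(t))-x(t)'Qx(t)$. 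By optimality $\VV^\MPC_{\xi_{t+1}^N}(x(t+1))\leq\tilde J$, whence $\VV^\MPC_{\xi_{t+1}^N}(x(t+1))-\VV^\MPC_{\xi_t^N}(x(t))\leq-\lambda_\rmin(Q)\|x(t)\|^2$. Combined with the lower bound $\lambda_\rmin(Q)\|x\|^2\leq\VV^\MPC_{\xi^N}(x)$ and, for $x\in\XX^\infty$, the local upper bound $\VV^\MPC_{\xi^N}(x)\leq x'\PP(\xi_0)x\leq\upsilon\|x\|^2$ obtained by telescoping \eqref{eq:pLFdef} along the terminal-controller trajectory, $\VV^\MPC_{\xi_t^N}$ is a Lyapunov function on the compact feasible set and the origin is AS in $\XX_f(\xi^N)$.

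I expect the main obstacle to be the terminal-stage bookkeeping, where three facts must line up at once: the appended input must satisfy $\UU$ via $\XX^\infty\subseteq\bar\XX$; the successor terminal state must stay in $\XX^\infty$ via robust invariance and convexity; and \eqref{eq:pLFdef} must be invoked with exactly the pair $(\xi_{N|t},\xi_{N|t+1})$ so the terminal cost introduced at $t$ cancels the candidate's stage-$N$ running cost. A secondary subtlety is the reading of the ``PI'' claim: since the exact preview shifts and acquires a fresh $\xi_{N|t+1}$ each step, invariance should be understood as recursive feasibility across the shifting family $\{\XX_f(\xi_t^N)\}_t$ rather than invariance of one fixed set, and the AS conclusion uses compactness of $\XX$ to secure the $\KK_\infty$ bounds on $\VV^\MPC_{\xi^N}$ over $\XX_f(\xi^N)$.
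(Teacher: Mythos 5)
Your proposal is correct and follows essentially the same route as the paper: the shifted-tail candidate with the appended terminal control $\kappa(\xi_{N|t})x^*_{N|t}$, feasibility of that input and invariance of $\XX^\infty$ from the construction \eqref{eq:MCAS}, and then the standard value-function decrease via \eqref{eq:pLFdef} exactly as in Lemma~\ref{lem:LPVstable}. Your extra bookkeeping on the terminal-stage parameter alignment and on reading ``PI'' across the shifting family $\{\XX_f(\xi_t^N)\}_t$ only makes explicit what the paper leaves implicit.
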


\begin{proof}
First we show that  
$\XX_f(\xi^N)\subseteq \XX$ is PI for the closed-loop system, i.e., if 
$x(t)\in\XX_f(\xi_t^N)$, then
$x(t+1)\in\XX_f(\xi_{t+1}^N)$, for all $\xi(t+1+N)\in\Xi$.
% This also means 
%that~\eqref{eq:FHOCparam} is feasible from 
%$x_{0|t+1}=x(t+1)$ with $\xi({t+1})^N$. 
 Since 
$x(t)\in\XX_f(\xi_t^N)$,
there 
exists  $U^*(t)=[u^*_{0|t} \ldots u^*_{N-1|t}]$ optimal (and
feasible) for~\eqref{eq:FHOCparam}. From 
$x(t+1)=x_{1|t}$, given $\xi_{t+1}^N$,  consider  $\tilde U = [\tilde 
u_{0|t+1}\ldots \tilde 
u_{N-1|t+1}]$, where 
$\tilde  u_{i|t+1} 
= 
u^*_{i+1|t}$ 
for $i\in\zz_{[0,N-2]}$, $ \tilde u_{N-1|t+1}=\kappa(\xi_{N|t}) x_{N|t}$.  By $ 
\xi_{t|k}=\bar \xi({t+k})$, it holds  
$\xi_{k|t+1}=\xi_{k+1|t}$, and  
the trajectory generated by $\tilde U$ is such that $\tilde x_{k|t+1}=x_{k+1|t}$, $\tilde 
x_{k|t+1}\in\XX$, $\tilde u_{k|t+1}\in\UU$  for all $k\in\zz_{[0,N-1]}$. Since 
$\XX_N=\XX^\infty$,
$x_{N-1|t+1}=x_{N|t}\in\XX^\infty\subseteq \XX$ and  $\kappa(\xi_{N-1|t+1}) x_{N|t}\in\UU$
for 
all 
$\xi_{N-1|t+1}\in\Xi$, hence~\eqref{eq:FHOCparam-constr} is satisfied. 
Also, $x_{N|t+1}\in\XX^N$, because  $\XX^N =\XX^\infty$ is PI for~\eqref{eq:LPV} 
in closed loop with~\eqref{eq:parTermCtrl} for all $\xi\in \Xi$.  
Thus, also~\eqref{eq:FHOCparam-termSet} is satisfied and $\tilde U$ is feasible 
from 
$x(t+1)$ for any $\xi_{t+1}^N$ that is  admissible according to the assumptions, and 
$x(t+1)\in\XX_f(\xi_{t+1}^N)$. Hence,   
$\XX_f(\xi^N)$ is 
PI. AS
follows by the same arguments of  
Lemma~\ref{lem:LPVstable} with $\VV_{ \xi^N }^\MPC$ as Lyapunov function, and with
$\XX_f(\xi^N)$ as domain of attraction.
\end{proof}

Next, we  ensure  robust  satisfaction of~\eqref{eq:FHOCparam-constr}, 
\eqref{eq:FHOCparam-termSet}, 
in presence of estimation error $\tilde \xi_{0|t}\neq 0$.

\subsection{Robust constraints design}

In order to ensure robust constraint satisfaction in the presence of parameter estimation error we 
design the constraint~\eqref{eq:FHOCparam-RCI} from a RCI set for the 
pLDI~\eqref{eq:pLDI1}, whose trajectories include those of~\eqref{eq:pLDI}.
Based on Definition~\ref{def:RCI}, let $\CC\subseteq \XX$ be a convex set such that for 
any $x\in\CC$ there exists $u\in\UU$ such that $A_ix+Bu\in\CC$ for all $i\in\zint{1,\ell}$. 
Given $\CC$, we design  $\CC_{xu}$ in \eqref{eq:FHOCparam-RCI} as 
\beq\label{eq:cxu}
 \CC_{xu}=\{(x,u)\in\CC\times\UU,\ A_ix+Bu\in\CC,\ \forall i\in\zint{1,\ell}\},
 \eeq 
that is, the state-input pairs that 
result in states within the RCI set for any vertex system of the 
pLDI~\eqref{eq:pLDI1}.

\begin{lemma}\label{lem:RCI}
Consider~\eqref{eq:FHOCparam} where $\XX_N=\rr^n$, and $\CC_{xu}$ 
in~\eqref{eq:FHOCparam-RCI} is defined 
by~\eqref{eq:cxu}.  If $x(t)\in\CC$, \eqref{eq:FHOCparam} is 
feasible 
for all $\tau \geq t$, for any $\xi_\tau^N\in\Xi^{N+1}$ and any $\tilde \xi_{0|\tau} \in\tilde 
\Xi(\xi_{0|\tau})$. 
\end{lemma}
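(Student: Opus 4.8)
The plan is to establish robust recursive feasibility by induction on $\tau$, using the robust control invariance of $\CC$ together with its convexity. The base case at $\tau=t$ requires exhibiting one feasible predicted trajectory, and the inductive step requires showing that the true closed-loop successor $x(\tau+1)$ again lies in $\CC$, regardless of the estimation error.

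For feasibility at a given $\tau$ with $x(\tau)\in\CC$, I would build a feasible control sequence step by step along the horizon. Since $x_{0|\tau}=x(\tau)\in\CC$ and $\CC$ is RCI, there exists $u_{0|\tau}\in\UU$ with $A_ix_{0|\tau}+Bu_{0|\tau}\in\CC$ for all $i\in\zint{1,\ell}$, so $(x_{0|\tau},u_{0|\tau})\in\CC_{xu}$ by \eqref{eq:cxu}. The predicted successor is the convex combination $x_{1|\tau}=\sum_{i=1}^\ell[\xi_{0|\tau}]_i(A_ix_{0|\tau}+Bu_{0|\tau})$, which lies in $\CC$ because $\xi_{0|\tau}\in\Xi$ and $\CC$ is convex. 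Iterating for $k=0,\ldots,N-1$ yields a predicted trajectory with $x_{k|\tau}\in\CC\subseteq\XX$, $u_{k|\tau}\in\UU$, and $(x_{k|\tau},u_{k|\tau})\in\CC_{xu}$, so that \eqref{eq:FHOCparam-constr} and \eqref{eq:FHOCparam-RCI} hold; the terminal constraint \eqref{eq:FHOCparam-termSet} is vacuous since $\XX_N=\rr^n$. This construction works for \emph{any} predicted sequence $\xi_\tau^N\in\Xi^{N+1}$, since at each step only the RCI property of $\CC$ and convexity are invoked.

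The inductive step is where robustness to the estimation error enters, and I expect it to be the crux. Let $u(\tau)=u_{0|\tau}^*$ be the optimal first input; feasibility forces $(x(\tau),u_{0|\tau}^*)\in\CC_{xu}$, hence $A_ix(\tau)+Bu_{0|\tau}^*\in\CC$ for every vertex $i$. The true closed-loop successor obeys \eqref{eq:pLDI}, and since $\sum_{i=1}^\ell[\bar\xi]_i=1$ I may rewrite it as $x(\tau+1)=\sum_{i=1}^\ell[\bar\xi]_i(A_ix(\tau)+Bu_{0|\tau}^*)$. As $\bar\xi\in\Xi$ is a convex combination and $\CC$ is convex, $x(\tau+1)\in\CC$. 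The key observation is that $\CC_{xu}$ constrains the successor under \emph{all} vertex systems simultaneously, so every convex combination — whether the predicted $\xi_{0|\tau}$ or the true, unknown $\bar\xi$ — maps into $\CC$; the estimation error $\tilde\xi_{0|\tau}=\bar\xi-\xi_{0|\tau}$ is therefore immaterial to invariance.

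Combining the two, from $x(t)\in\CC$ the base case gives feasibility at $t$, and the inductive step gives $x(t+1)\in\CC$ for any admissible $\tilde\xi_{0|t}$; repeating yields $x(\tau)\in\CC$ and feasibility of \eqref{eq:FHOCparam} for all $\tau\geq t$, for every $\xi_\tau^N\in\Xi^{N+1}$ and every $\tilde\xi_{0|\tau}\in\tilde\Xi(\xi_{0|\tau})$. The main obstacle is really just bookkeeping: making explicit that a single convexity argument simultaneously handles the predicted dynamics (keeping the candidate trajectory in $\CC$) and the true dynamics (keeping the closed loop in $\CC$ under arbitrary estimation error), which is precisely what decouples feasibility from estimator accuracy.
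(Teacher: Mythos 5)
Your proof is correct and follows essentially the same route as the paper's: the existence of a feasible input at each state in $\CC$ from the RCI property and the definition of $\CC_{xu}$, convexity of $\CC$ to keep both the predicted successor (under $\xi_{k|\tau}$) and the true successor (under the unknown $\bar\xi$) in $\CC$, and induction over $\tau$. The paper's version is just terser, compressing the along-the-horizon construction into ``the reasoning can be repeated.''
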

\begin{proof}
Due to the definition of $\CC_{xu}$, for all $x\in\CC$ there exists $u\in\UU$ such that 
$(x,u)\in\CC_{xu}$. Thus, if $(x(t),u(t))\in\CC_{x u}$,  by convexity  
$x(t+1)=\sum_{i=1}^\ell [\bar \xi]_i
A_ix(t)+Bu(t)\in\CC\subseteq \XX$, for all $\xi_{0|t}+\tilde \xi_{0|t}=\bar \xi\in\Xi$,  i.e., 
\eqref{eq:FHOCparam-RCI} and hence~\eqref{eq:FHOCparam-constr}  are 
satisfied for 
any actual $\bar \xi\in\Xi$. Since $x(t+1)\in\CC$, 
the reasoning can be repeated proving robust feasibility for any $\tau \geq t$.
\end{proof}

%Next, we propose two ways to construct $\CC$ and to combine it with   the design of the 
%terminal set $\XX_N$.

%\subsubsection{Design by maximal RCI set}\label{sssec:mRCI}
$\CC$ can be computed as the maximal RCI set for~\eqref{eq:pLDI1} 
from the sequence~\cite{BM08},
 \bsube \label{eq:mRCIcompute}
\algn{
\CC^{(0)}&= \XX,\\ 
\CC^{(h+1)}&= \{x:\ \exists u\in\UU, \nonumber
\\ &\hspace{0.8cm}A_ix+Bu\in\CC^{(h)},\  \forall 
i\in\zint{1,\ell}\}\cap \CC^{(h)}.
 }\esube 
The maximal RCI set in $\XX$  is the fixpoint of~\eqref{eq:mRCIcompute}, i.e., 
$\CC^\infty =\CC^{(\bar h)}$ such that $\CC^{(\bar h+1)}=\CC^{(\bar h)}$, and  is the largest 
set  within $\XX$ that can be made invariant  
for~\eqref{eq:pLDI1} with inputs in $\UU$.

Based on Lemma~\ref{lem:RCI}, by  $\CC_{xu}$ in~\eqref{eq:cxu} we obtain 
constraint 
satisfaction even when $\tilde \xi_{0|t}\neq 0$. However, the 
maximal RCI 
set does not guarantee that the terminal constraint can be 
satisfied, that is, it may not be possible to reach $\XX_N$ in $N\in\zz_+$ steps for all 
$x\in\CC$  
by trajectories such that $(x_{k|t},u_{k|t})\in\CC_{xu}$. Furthermore, 
for Lemma~\ref{lem:ConstrLPVstable} to hold,  the control inputs generated 
by~\eqref{eq:parTermCtrl} that make $\XX_N$  
PI for~\eqref{eq:LPVcl} must be feasible for~\eqref{eq:FHOCparam}, that is, 
$(x,\kappa(\xi)x)\in\CC_{xu}$ for every $x\in\XX_N$, $\xi\in\Xi$. 
%is feasible with respect to the control invariant set constraints, i.e., 
%$(x,\sum_{i=1}^\ell [\xi]_iK_ix)\in\CC^{x,u}$, for all $\xi\in\Xi$.

To guarantee satisfaction of the terminal constraint, the 
horizon $N$ needs to be selected such that for every $x\in\CC$ and $\xi^N\in 
\Xi^{N+1}$, there 
exists 
$[u(0) \ldots  u(N-1)]$ such that for~\eqref{eq:LPV} with $x(0)=x$, $\xi(k)=\xi_k$ for all 
$k\in\zint{0,N}$, 
$(x(k),u(k))\in\CC_{xu}$  for all 
$k\in\zint{0,N-1}$,  and 
$x(N)\in\XX_N$.  
Let 
 \begin{align}\label{eq:horizonDesign}
 \SS^{(0)}&=\XX_N,\no \\
\SS^{(h+1)}_i&=\{x\in\XX:\ \exists u\in\UU, \no \ 
A_ix+Bu\in\SS^{(h)}\},\\
 \SS^{(h+1)}&=\bigcap_{i=1}^\ell \SS^{(h+1)}_i.
 \end{align} 
The set $\SS^{(h)}$ is such that for any $x(0)\in\SS^{(h)}$, given any 
$ \xi^{h-1}\in\Xi^h$, there exists a sequence $[u(0) \ldots  u(h-1)]$ such that 
for~\eqref{eq:LPV} 
with 
$x(0)=x$ and $\xi(k)=\xi_k$ for all 
$k\in\zint{0,N}$,  
$(x(k),u(k))\in\CC_{xu}$ and 
$x(h)\in\XX_N$. 

\begin{theorem}\label{th:constrStability1}
Consider~\eqref{eq:FHOCparam},  let
 $\bar h\in\zz_{0+}$ be such that $\CC^{(\bar 
h+1)}=\CC^{(\bar h)}=\CC$ in~\eqref{eq:mRCIcompute}, and let $\CC_{xu}$ be 
defined by~\eqref{eq:cxu}. 
Let $\XX_N=\XX^\infty$  from~\eqref{eq:MCAS}, where   $\XX_{xu}=\CC_{xu}$, 
and
 $N\in\zz_{0+}$ be such 
that $\SS^{(N)}\supseteq \CC$.  If $x(  t)\in\CC$ at $t\in\zz_{0+}$,  
 and    $\xi_\tau^N\in\Xi^{N+1}$, $\tilde 
\xi_{0|\tau}\in\tilde\Xi(\xi_{0|\tau})$ for all 
 $\tau \geq t$,  \eqref{eq:FHOCparam}~is feasible for all 
 $\tau \geq t$. 
If   there exists $t\in\zz_{0+}$  such that $  \xi_{k|\tau}= \bar \xi(\tau+k)$ for all 
$\tau\geq t$, $k\in\zz_{[0,N]}$,
\eqref{eq:system} in closed-loop with the 
MPC that solves~\eqref{eq:FHOCparam} is also AS in $\CC$.
\end{theorem}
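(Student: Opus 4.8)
The plan is to prove the two assertions separately, reusing the machinery already developed: robust feasibility will come from the invariance of the RCI set $\CC$ (Lemma~\ref{lem:RCI}) combined with the reachability condition $\SS^{(N)}\supseteq\CC$, whereas asymptotic stability under correct preview reduces to a shift argument of the type in Lemma~\ref{lem:ConstrLPVstable}, adapted to the presence of the extra constraint $\CC_{xu}$.

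For robust feasibility I would induct on $\tau\geq t$. At $\tau=t$, since $x(t)\in\CC\subseteq\SS^{(N)}$, the defining property of $\SS^{(N)}$ furnishes, for the given preview $\xi_t^N$, an input sequence whose predicted trajectory obeys $(x_{k|t},u_{k|t})\in\CC_{xu}$ for $k\in\zint{0,N-1}$ and $x_{N|t}\in\XX_N$, so \eqref{eq:FHOCparam} is feasible at $t$. For the step I would deliberately avoid a shifted sequence — the estimation error makes $x(\tau+1)\neq x_{1|\tau}$ in general — and instead note that the applied optimal input satisfies $(x(\tau),u_{0|\tau}^*)\in\CC_{xu}$, whence Lemma~\ref{lem:RCI} gives $x(\tau+1)=\sum_i[\bar\xi]_iA_ix(\tau)+Bu_{0|\tau}^*\in\CC$ for every admissible $\bar\xi=\xi_{0|\tau}+\tilde\xi_{0|\tau}\in\Xi$. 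Thus $x(\tau+1)\in\CC\subseteq\SS^{(N)}$, and feasibility at $\tau+1$ follows exactly as at $t$, closing the induction for every $\tau\geq t$ and every estimation error.

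For stability I would restrict to the regime $\xi_{k|\tau}=\bar\xi(\tau+k)$, which forces $\tilde\xi_{0|\tau}=0$, so the realized state equals the one-step prediction, $x(\tau+1)=x_{1|\tau}$, and the preview is shift-consistent, $\xi_{k|\tau+1}=\xi_{k+1|\tau}$. The shifted candidate $\tilde u_{k|\tau+1}=u_{k+1|\tau}^*$ for $k\in\zint{0,N-2}$ and $\tilde u_{N-1|\tau+1}=\kappa(\xi_{N-1|\tau+1})x_{N|\tau}$ is then available; its inner states $\tilde x_{k|\tau+1}=x_{k+1|\tau}$ inherit membership in $\CC_{xu}$ from optimality at $\tau$. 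Checking the terminal step against the new constraint $\CC_{xu}$ is where I would exploit that $\XX_N=\XX^\infty$ is built from \eqref{eq:MCAS} with $\XX_{xu}=\CC_{xu}$: then $\XX^\infty\subseteq\XX^{(0)}=\{x:(x,K_ix)\in\CC_{xu},\ \forall i\}$ and, by convexity of $\CC_{xu}$, $(x_{N|\tau},\kappa(\xi)x_{N|\tau})\in\CC_{xu}$ for all $\xi\in\Xi$, while the PI property of $\XX^\infty$ for \eqref{eq:LPVcl} places the new terminal state back in $\XX_N$. With the candidate feasible, the cost comparison via \eqref{eq:pLFdef}, exactly as in Lemma~\ref{lem:ConstrLPVstable}, yields $\VV_{\xi_{\tau+1}^N}^\MPC(x(\tau+1))\leq\VV_{\xi_\tau^N}^\MPC(x(\tau))-x(\tau)'Qx(\tau)$. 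Combined with the uniform quadratic bounds on $\VV_{\xi^N}^\MPC$ holding for all $\xi^N\in\Xi^{N+1}$, this makes $\VV_{\xi_\tau^N}^\MPC$ a (time-varying) Lyapunov function; since $\CC$ is invariant by Lemma~\ref{lem:RCI} and $\CC\subseteq\XX_f(\xi^N)$, AS in $\CC$ follows.

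The step I expect to be the main obstacle is reconciling the two mechanisms at the terminal stage: robustness is enforced through $\CC_{xu}$ while stability is enforced through the terminal triple $(\PP,\kappa,\XX_N)$, and one must certify they are mutually compatible — namely that the terminal law $\kappa(\xi)$ stays admissible for the tighter $\CC_{xu}$ (secured by choosing $\XX_{xu}=\CC_{xu}$ in \eqref{eq:MCAS} plus convexity of $\CC_{xu}$), and that the horizon condition $\SS^{(N)}\supseteq\CC$ certifies reachability of $\XX_N$ along trajectories confined to $\CC_{xu}$ rather than merely to $\XX\times\UU$. The auxiliary facts $\XX^\infty\subseteq\CC$ and convexity of $\CC_{xu}$ are the small glue lemmas linking the feasibility and stability arguments.
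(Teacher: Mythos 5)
Your proposal is correct and follows essentially the same route as the paper's proof: initial feasibility from $\CC\subseteq\SS^{(N)}$, robust recursive feasibility from Lemma~\ref{lem:RCI} returning $x(\tau+1)$ to $\CC$ for every admissible $\tilde\xi_{0|\tau}$, and AS under perfect preview via the shift argument of Lemma~\ref{lem:ConstrLPVstable} with the terminal law made admissible for $\CC_{xu}$ by computing $\XX_N$ from~\eqref{eq:MCAS} with $\XX_{xu}=\CC_{xu}$. Your explicit convexity argument for $(x,\kappa(\xi)x)\in\CC_{xu}$, $\xi\in\Xi$, merely spells out a step the paper leaves implicit.
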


\begin{proof}
Since $\CC\subseteq \SS^{(N)}$ for every $x(t)\in\CC$ and $ \xi_t^N\in\Xi^{N+1}$, there 
exists 
an 
input 
sequence  of length $N$ such that $(x_{k|t},u_{k|t})\in\CC_{xu}$ for all 
$k\in\zint{0,N-1}$, 
and   $x_{N|t}\in\XX_N$. Due to Lemma~\ref{lem:RCI},  for $u_{0|t}$ such that 
$(x_{0|t},u_{0|t})\in\CC_{xu}$, $\sum_{i=1}^\ell [  \xi_{0|t}+\tilde 
\xi_{0|t}]_iA_ix_{0|t}+Bu_{0|t}\in\CC$, 
for every  $\tilde \xi_{0|t}\in\tilde\Xi(\xi_{0|t}) $. 
Furthermore, if  $ \xi_{\tau}^N= \bar \xi_{\tau}^N\in\Xi^{N+1}$ for all $\tau\geq t$,  AS  in 
$\CC$ of 
\eqref{eq:system} in 
closed-loop 
with the MPC that solves~\eqref{eq:FHOCparam}
follows from Lemma~\ref{lem:ConstrLPVstable} noting that if $x_{N |t}\in\XX_N$, 
$x_{N-1|t+1}=x_{N |t}$, and  
$u_{N-1 |t+1}=\kappa(\xi_{N |t})x_{N |t}$, 
  then 
$(x_{N-1 |t+1},u_{N-1|t+1})\in\CC_{xu}$, i.e., \eqref{eq:parTermCtrl} is admissible in 
$\XX_N$ 
with 
respect~\eqref{eq:FHOCparam-RCI}, which follows from computing $\XX_N$ 
by~\eqref{eq:MCAS} with $\XX_{xu}=\CC_{xu}$.
\end{proof}

In the definition of $\SS^{(h)}$, i.e., \eqref{eq:horizonDesign}, the parameter sequence 
$\xi^h$ 
is 
known. This is due to enforcing  the terminal set only with respect to the 
nominal dynamics, while the robust invariance of~$\CC$ and choosing $N$ so that 
$\SS^{(N)}\supseteq 
\CC$ guarantee that at a successive 
step, even in presence of a parameter estimation error which causes a prediction error, the 
terminal set will still be reachable in $N$ steps.

\sdcRemoved{

}
There are alternative ways to compute $\CC$, other than as the maximal RCI. For instance, an RPI 
set 
can 
be constructed from~\eqref{eq:pLDIcl} by including additional constraints 
in~\eqref{eq:pLF-designLMI}. Then, for given $N\in\zz_+$,  $\CC$ can be obtained as $N$-step 
backward reachable set 
of such RPI. In this case the MPC horizon $N$ becomes a free design parameter. Such a
procedure is not fully described here due to the limited space. 

Theorem~\ref{th:constrStability1}  ensures robust feasibility 
of \eqref{eq:FHOCparam}, robust satisfaction of~\eqref{eq:sysConstr}, and 
nominal asymptotic stability, in the sense that the closed loop is AS if there exists $ 
t\in\zz_{0+}$ such that $\tilde \xi_{k|\tau}=0$, for all $\tau\geq t$, $k\in\zz_{[0,N]}$. 
Next, we 
combine 
Theorem~\ref{th:ISS} and~\ref{th:constrStability1}.

\section{Indirect-adaptive MPC: Complete Algorithm}\label{sec:IAMPC}

The last design element in~\eqref{eq:FHOCparam} is the construction of the parameter prediction 
vector  $\xi_t^N$. 

Since  $\bar \xi$ in~\eqref{eq:system} is assumed to be constant or slowly varying, an obvious 
choice would be $  \xi_{k|t}= \xi(t)$, for all $k\in\zint{0,N}$, for all $t\in\zz_{0+}$.  However, 
this choice violates the assumption of Theorem~\ref{th:ISS} (and implicitly 
those of Lemmas~\ref{lem:LPVstable} and~\ref{lem:ConstrLPVstable}) that requires 
$  \xi_{k |t+1}=   \xi_{k+1|t}$, for all $k\in\zint{0,N-1}$,  $t\in\zz_{0+}$. 
Such an assumption is required because  if  the  entire parameter  prediction 
vector $ \xi^N_t$ suddenly  changes,  the value function $\VV_N^\MPC$ may not be decreasing. 
This is due to using the pLF only as terminal cost, as opposed to enforcing it along the 
entire horizon~\cite{cuzzola02,mao03}, which then requires the solution of LMIs in real-time.

Thus, we introduce a $N$-step delay in the parameter prediction, 
\beq \label{eq:parUpdate}
   \xi_{k|t}=  \xi(t-N+k),\ \forall k\in\zint{ 0,N }.
  \eeq
Due to~\eqref{eq:parUpdate}, at each time $t$, the new estimate is 
added as last element of $\xi_t^N$, i.e., $  \xi_{N|t}= \xi(t)$ and $\xi_{k|t}=   
\xi_{k+1|t-1}$, 
for all 
$k\in\zint{0,N-1}$, $t\in\zz_{0+}$. We 
can now state the complete IAMPC strategy and its main result.

\begin{theorem}\label{th:final}
Consider~\eqref{eq:system}, where $\bar \xi \in \Xi$, in closed loop with the IAMPC 
controller 
that at every 
$t\in\zz_{0+}$ 
solves~\eqref{eq:FHOCparam}, where~$\PP(\xi )$  defined by~\eqref{eq:parLF} and 
$\kappa(\xi)$  defined 
by~\eqref{eq:parTermCtrl} are
from~\eqref{eq:pLF-designLMI}, $\CC$ and $\XX_N$ are designed according to  
Theorem~\ref{th:constrStability1} and $ 
\xi^N_t\in\Xi^{N+1}$ is 
obtained from~\eqref{eq:parUpdate}, 
where $  \xi(t)\in\Xi$ for all $t\in\zz_{0+}$. 
If for some $  t\in \zz_{0+}$, $x( t)\in\CC$,  the closed-loop   
satisfies~\eqref{eq:sysConstr}, 
and \eqref{eq:FHOCparam} is recursively 
feasible for any $\tau\geq t$. Furthermore, the closed loop is ISS in the RPI set 
$\CC$ with respect 
to $\tilde \xi_{0|t} = \bar \xi-\xi_{0|t}$, i.e., the $N$-steps delayed 
estimation error $\tilde \xi_{0|t}=\bar \xi - \xi(t-N)$.
\end{theorem}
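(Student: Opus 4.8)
The plan is to assemble the final result by composing the two main theorems already available: Theorem~\ref{th:constrStability1}, which delivers robust recursive feasibility and constraint satisfaction, and Theorem~\ref{th:ISS}, which delivers ISS of the value function with respect to the estimation error. The linchpin that makes both applicable is the delayed parameter update~\eqref{eq:parUpdate}: I would first verify that it produces exactly the shift structure both theorems assume. Indeed, from $\xi_{k|t}=\xi(t-N+k)$ one reads off $\xi_{k+1|t}=\xi(t-N+k+1)=\xi((t+1)-N+k)=\xi_{k|t+1}$, so $\xi_{k|t+1}=\xi_{k+1|t}$ holds for all $k\in\zint{0,N-1}$ and all $t$. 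This single algebraic identity is what licenses the warm-start/telescoping argument underlying both the decrease of $\VV^\MPC$ and the feasibility shift.

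With the shift verified, I would invoke Theorem~\ref{th:constrStability1} directly. Its hypotheses are met by construction: $\CC$ is the maximal RCI set, $\XX_N=\XX^\infty$ is computed with $\XX_{xu}=\CC_{xu}$, and $N$ is chosen so that $\SS^{(N)}\supseteq\CC$. Hence, for $x(t)\in\CC$, problem~\eqref{eq:FHOCparam} is feasible for every $\tau\geq t$ and every admissible estimation error, and the applied input keeps the state in $\CC$ for any true $\bar\xi\in\Xi$ by Lemma~\ref{lem:RCI}. This gives robust satisfaction of~\eqref{eq:sysConstr} and, crucially, establishes that $\CC$ is RPI with respect to $\tilde\xi_{0|t}$ for the closed loop.

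For the ISS claim I would specialize Theorem~\ref{th:ISS} by taking $\XX_L=\XX_\eta=\CC$. This is legitimate because, under Assumption~\ref{ass:compact}, $\CC\subseteq\XX$ is compact, so Result~\ref{res:LipCont} yields Lipschitz continuity of $\VV^\MPC_{\xi^N}$ on $\CC$ uniformly over $\xi^N\in\Xi^{N+1}$, while the RPI property of $\CC$ just obtained supplies the remaining hypothesis of Theorem~\ref{th:ISS}. The conclusion is then that $\VV^\MPC_{\xi^N}$ is an ISS-Lyapunov function on $\CC$ and the descent inequality~\eqref{eq:issLF} holds, i.e., the closed loop is ISS in $\CC$ with respect to $\tilde\xi_{0|t}$. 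Finally, reading $\xi_{0|t}=\xi(t-N)$ off~\eqref{eq:parUpdate} identifies the disturbance as the $N$-steps delayed estimation error $\tilde\xi_{0|t}=\bar\xi-\xi(t-N)$, as stated.

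The step I expect to be the main obstacle is reconciling the hypotheses of the two theorems on a single common domain. Theorem~\ref{th:ISS} was derived in an unconstrained setting where $\XX_L$ is merely where the value function is Lipschitz, whereas feasibility forces attention to $\CC$; I must check that the uniform lower and upper quadratic bounds on $\VV^\MPC$ (inherited from Lemma~\ref{lem:LPVstable} through the terminal cost and the feasible warm-start available on $\CC$) and the Lipschitz constant all hold simultaneously on the single compact, RPI set $\CC$, so that $\CC$ can play the roles of $\XX_L$, $\XX_\eta$, and the ISS domain at once. Once this is confirmed, the composition of Theorems~\ref{th:constrStability1} and~\ref{th:ISS} is immediate.
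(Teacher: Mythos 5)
Your proposal is correct and follows essentially the same route as the paper: verify that the delayed update~\eqref{eq:parUpdate} yields the shift property $\xi_{k|t+1}=\xi_{k+1|t}$, invoke Theorem~\ref{th:constrStability1} for recursive feasibility, robust constraint satisfaction, and RPI of $\CC$, then set $\XX_L=\XX_\eta=\CC$ (compact, RPI) so that Theorem~\ref{th:ISS} applies and identifies $\tilde\xi_{0|t}=\bar\xi-\xi(t-N)$ as the ISS disturbance. The "obstacle" you flag at the end is resolved exactly as you anticipate, and exactly as the paper does, by letting the single compact RPI set $\CC$ play the roles of $\XX_L$ and $\XX_\eta$ simultaneously.
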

\begin{proof}
The proof  follows by combining Theorem~\ref{th:ISS} with Theorem~\ref{th:constrStability1}. By 
Theorem~\ref{th:constrStability1}, $\CC$ is RCI, and if $x(t)\in\CC$, 
\eqref{eq:FHOCparam}~is  feasible  for all   $\tau\geq t$, for any
$\xi_\tau^N\in\Xi^{N+1}$ that satisfies~\eqref{eq:parUpdate}, since~\eqref{eq:parUpdate} 
implies that  $\xi_{k |\tau}=   \xi_{k +1|\tau-1}$, for all $k\in\zint{0,N-1}$. 
Thus, by~\eqref{eq:cxu} enforced in~\eqref{eq:FHOCparam}, $\CC\subseteq \XX$ is a compact 
RPI for the 
closed-loop system, and hence~\eqref{eq:sysConstr} is satisfied for all 
 $\tau\geq t$. 
% Also,Theorem~\ref{th:constrStability1} or~\ref{th:constrStability2} 
%guarantee 
%that 
%when $ \xi_{k|\tau}=\bar \xi$ for all $k\in\zint{0,N}$, $\tau\in\zz_{[t,\infty)}$, the 
%closed-loop is AS.
Since $\VV_{\xi^N}^\MPC$ is piecewise quadratic for every $\xi^N\in\Xi^{N+1}$, by taking  
$\XX_\eta=\XX_L=\CC$, which is RPI for the 
closed-loop system and compact  since $\CC\subseteq \XX$, the existence of 
a Lipschitz  constant $L$ is guaranteed. 
Hence,
Theorem~\ref{th:ISS} holds within $\CC$,
proving ISS  with respect to $\tilde \xi_{0|t}
=\bar \xi-\xi_{0|t}= \bar \xi - 
\xi(t-N)$, i.e., the delayed estimation error. 
\end{proof}

Based on Theorem~\ref{th:final}, from any initial state $x( t)\in\CC$, the closed-loop 
system 
robustly satisfies the constraints for any admissible estimation error, and the expansion term 
in the ISS Lyapunov function is proportional to  the norm of the delayed parameter 
estimation error. Thus, if the parameter  estimate converges at time $t^*$ and such value 
is maintained for all $t\geq t^*$, for all $t\geq t^*+N$, $\tilde \xi_t^N=0$ and hence the 
closed-loop is AS. However, note that ISS holds regardless of such convergence.  Finally, 
note that 
at runtime, the IAMPC solves only a QP
as a standard (non-adaptive) linear MPC. Thus, the following corollary derives immediately 
from Theorem~\ref{th:final}.
\begin{corollary}
The IAMPC designed according to Theorem~\ref{th:final} solves 
Problem~\ref{prob:ampc}.
\end{corollary}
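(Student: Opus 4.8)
The plan is to verify that the $N$-step delayed parameter update~\eqref{eq:parUpdate} satisfies the shift hypothesis shared by Theorems~\ref{th:ISS} and~\ref{th:constrStability1}, and then to invoke those two results in sequence: the latter for robust feasibility and constraint satisfaction, the former (re-run in the constrained setting) for the ISS estimate. First I would observe that replacing $t$ by $t+1$ in~\eqref{eq:parUpdate} gives $\xi_{k|t+1}=\xi(t+1-N+k)=\xi(t-N+(k+1))=\xi_{k+1|t}$ for every $k\in\zint{0,N-1}$, so the predicted sequence merely shifts by one step and appends the newest estimate $\xi_{N|t+1}=\xi(t+1)$. This is exactly the condition $\xi_{k|t+1}=\xi_{k+1|t}$ assumed in both theorems, and it holds for all $t$ because Assumption~\ref{ass:estimate} keeps each $\xi(t)\in\Xi$, whence $\xi_t^N\in\Xi^{N+1}$.

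Second I would establish robust recursive feasibility and constraint satisfaction by a direct appeal to Theorem~\ref{th:constrStability1}. Since $\CC$ and $\XX_N$ are built as prescribed there (with $\CC_{xu}$ from~\eqref{eq:cxu}, $\CC$ the maximal RCI set from~\eqref{eq:mRCIcompute}, and $N$ chosen so that $\SS^{(N)}\supseteq\CC$), and since the shift property just verified is precisely the admissibility requirement imposed on $\xi_\tau^N$ in that theorem, feasibility of~\eqref{eq:FHOCparam} at some $t$ with $x(t)\in\CC$ propagates to all $\tau\geq t$, regardless of the actual $\bar\xi$ and of the estimation error $\tilde\xi_{0|\tau}$. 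The RCI property of $\CC$ enforced through~\eqref{eq:FHOCparam-RCI} then guarantees $x(\tau)\in\CC\subseteq\XX$ for all $\tau\geq t$, so~\eqref{eq:sysConstr} is satisfied and $\CC$ is a compact RPI set for the closed loop.

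Third I would supply the ISS estimate. Because Theorem~\ref{th:ISS} was stated for the unconstrained configuration, I cannot cite it verbatim; instead I would re-run its argument on the RPI set $\CC$. The two ingredients needed are the nominal one-step decrease and Lipschitz continuity of the value function. The decrease is furnished by Lemma~\ref{lem:ConstrLPVstable}, whose shifted-tail candidate is feasible precisely because $\XX_N=\XX^\infty$ is computed with $\XX_{xu}=\CC_{xu}$, so that $(x,\kappa(\xi)x)\in\CC_{xu}$ on $\XX_N$; Lipschitz continuity follows from Result~\ref{res:LipCont} by taking $\XX_L=\CC$, which is legitimate since $\CC$ is compact and $\VV_{\xi^N}^\MPC$ is piecewise quadratic for each $\xi^N\in\Xi^{N+1}$. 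With these in hand, the chain in Lemma~\ref{lem:ISSxw} and Theorem~\ref{th:ISS} reproduces~\eqref{eq:issLF} on $\CC$, yielding an ISS-Lyapunov function $\VV_{\xi_t^N}^\MPC$ for the closed loop. Reading the disturbance off~\eqref{eq:parUpdate}, the relevant error is $\tilde\xi_{0|t}=\bar\xi-\xi_{0|t}=\bar\xi-\xi(t-N)$, i.e.\ the $N$-step delayed estimation error, which completes the claim.

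The main obstacle is the third step: the bookkeeping that lets the unconstrained ISS machinery operate inside the constrained domain. The delicate point is that Theorem~\ref{th:ISS} presumes the RPI set $\XX_\eta$ to be contained in the Lipschitz region $\XX_L$, constructed there as a sublevel set, whereas here I must argue that $\CC$ itself can play both roles simultaneously. This is resolved by the compactness of $\CC$ (so that a single Lipschitz constant $L$ exists on it) together with the robust invariance of $\CC$ already secured in the second step; once $\XX_L=\XX_\eta=\CC$ is admissible, no iterative shrinking of the invariant set is required and the estimate~\eqref{eq:issLF} holds uniformly on $\CC$.
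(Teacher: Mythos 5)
Your first three steps are sound, but they amount to reconstructing the proof of Theorem~\ref{th:final} rather than proving the corollary: the shift property of~\eqref{eq:parUpdate}, robust recursive feasibility via Theorem~\ref{th:constrStability1}, and the transplanting of the ISS argument onto $\XX_L=\XX_\eta=\CC$ are exactly the contents of Theorem~\ref{th:final} and its proof, so for the corollary you could simply have cited that theorem to dispose of requirements $(i)$ and $(ii)$ of Problem~\ref{prob:ampc}. The genuine gap is that you never address requirement $(iii)$: guaranteed convergence of the runtime numerical algorithms and a computational load comparable to a non-adaptive MPC. Since the corollary claims that the IAMPC solves Problem~\ref{prob:ampc} in its entirety, and $(iii)$ is the one item not already contained in the statement of Theorem~\ref{th:final}, it is precisely the piece your argument must supply and does not.

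The missing observation is short but essential, and it is the one the paper leans on: all design computations --- solving the LMI~\eqref{eq:pLF-designLMI} for $P_i$, $K_i$, and running the set iterations \eqref{eq:MCAS}, \eqref{eq:mRCIcompute}, \eqref{eq:horizonDesign} --- are performed offline, so that at runtime, for each fixed $\xi_t^N$, problem~\eqref{eq:FHOCparam} has linear prediction dynamics~\eqref{eq:FHOCparam-dyn}, polyhedral constraints (the sets $\XX$, $\UU$, $\CC_{xu}$, $\XX_N$ are all polyhedra obtained from finitely determined iterations), and a convex quadratic cost since $Q,R>0$ and $\PP(\xi_{N|t})=\sum_i[\xi_{N|t}]_iP_i>0$ for every $\xi_{N|t}\in\Xi$. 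Hence the online problem is a strictly convex QP, for which solver convergence is guaranteed and whose size matches that of a standard linear MPC --- in contrast to the robust schemes requiring online LMIs. Adding this paragraph closes the gap; without it the corollary is only two-thirds proved.
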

The requirement of soling only QP during execution in a significant reduction of computational 
load and code complexity with respect to robust MPC based on 
LMIs~\cite{KBM96,cuzzola02,mao03}. The drawback is on ensuring ISS versus the robust stability 
in~\cite{KBM96,cuzzola02,mao03}. However, here we still guarantee robust constraint 
satisfaction.  

\subsection{Comments on parameter estimator design}\label{ssec:Estimator}

The ISS  property established in  Theorem~\ref{th:final} implies that when the estimator 
converges to the true parameter value the closed-loop becomes AS, but it is more general than 
that. In fact, ISS ensures that, even if the estimate never converges, the expansion term in the 
Lyapunov function, and hence the ultimate bound on the state, is proportional the estimation error. 
Thus, ISS allows to state properties that ae parametric in the estimation error, and hence hold 
regardless of the convergence of the estimator. Because of this and because the IAMPC design 
does not require a specific choice for the estimator design, we have called the   
IAMPC as independent of the estimator. On the other hand, it is 
required for the estimator to provide 
$\xi(t)\in\Xi$, for all $t\in\zz_{0+}$, as per Assumption~\ref{ass:estimate}. To enforce 
Assumption~\ref{ass:estimate} one can 
always design an estimator that produces the unconstrained estimate $\varrho\in\rr^\ell$, while 
the 
IAMPC uses its 
projection onto $\Xi$, i.e., $ {\xi}={\rm proj}_\Xi(\varrho)$. By using the $\varrho\in\rr^\ell$  in 
the 
estimator update equation and providing ${\xi}={\rm proj}_\Xi(\eta)$ to the controller, this 
amounts to a standard estimator with an output nonlinearity. Thus, the convergence conditions will 
be the 
same as those for standard estimators, in particular  identifiability, and persistent excitation. 
Guaranteeing the persistent excitation in closed-loop systems is currently an active area of 
research also in MPC, see, e.g.,~\cite{marafioti2010persistently, rathousky2013mpc,WD14}. While 
it is certainly an interesting future research 
direction to merge the 
IAMPC developed here with some of the above techniques, it is worth remarking again that for 
the ISS property in Theorem~\ref{th:final} to hold, this is not necessary as convergence is not 
required.  

As regards to identifiability, a subject that is worth a brief discussion is the case 
where the true value of the parameter $\bar \xi$
is not uniquely define, which may be due to the polytopic representation~\eqref{eq:system} of the 
uncertain 
system. Given the actual system matrix $\bar A$ the set $\bar \xi(\bar A) =\{\bar 
\xi\in\Xi:\sum_{i=1}^\ell[\xi]_iA_i = \bar A\}$ may have cardinality greater than $1$. In 
this 
case 
we can provide a slightly modified ISS Lyapunov function.
\begin{corollary}
Let the assumptions of Theorem~\ref{th:ISS} hold and $\eps(\xi,{\bar A})  =\min_{\bar 
\xi\in\bar\xi(\bar A)}\|\xi-\bar\xi\|$. Then for~\eqref{eq:system} in closed loop with the MPC 
based 
on~\eqref{eq:FHOCparam}, $\VV_{\xi_{t+1}^N}^\MPC(x(t+1))\leq  \VV_{\xi_{t }^N}^\MPC(x(t)) 
-\lambda_\rmin(Q) 
\|x(t)\|^2 +   \gamma_{\rm ISS}\cdot \eps(\xi,{\bar A})$, i.e., $\VV_{\xi^N}^\MPC$ is an ISS 
Lyapunov function  
  with respect to $\eps(\xi,{\bar A})$. 
\end{corollary}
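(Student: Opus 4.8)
The plan is to re-run the proof of Theorem~\ref{th:ISS} essentially verbatim, the only modification being a sharper bound on the one-step state prediction error that exploits the non-uniqueness of the representation $\bar\xi$. Recall that throughout Lemma~\ref{lem:ISSxw} and Theorem~\ref{th:ISS} the estimation error enters the analysis at a single point, namely through the prediction error $\eps_x$ of~\eqref{eq:pLDIerr}, which is then propagated to the value function via the Lipschitz constant $L$ of Result~\ref{res:LipCont}. Hence I would only need to replace the estimate $\gamma_A\|\tilde\xi_{0|t}\|_1\|x\|$ by one in terms of $\eps(\xi,\bar A)$ and then copy the remaining steps unchanged.

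The key observation is that the actual successor state is $x(t+1)=\bar A x(t)+Bu(t)$ with $\bar A=\sum_{i=1}^\ell[\bar\xi]_iA_i$, so the prediction error depends on the true dynamics only through $\bar A$ and not through the particular convex combination vector that generates it. First I would note that for every $\hat\xi\in\bar\xi(\bar A)$ one has $\sum_{i=1}^\ell[\hat\xi]_iA_i=\bar A$, so that~\eqref{eq:pLDIerr} may be rewritten, for any such $\hat\xi$, as
$$\eps_x=\left(\sum_{i=1}^\ell[\hat\xi]_iA_i-\sum_{i=1}^\ell[\xi_{0|t}]_iA_i\right)x=\sum_{i=1}^\ell[\hat\xi-\xi_{0|t}]_iA_i\,x.$$
Bounding as in~\eqref{eq:pLDIerr} gives $\|\eps_x\|\leq\gamma_A\|\hat\xi-\xi_{0|t}\|_1\|x\|$ simultaneously for every $\hat\xi\in\bar\xi(\bar A)$. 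Since this holds for all such $\hat\xi$, I would take the tightest bound by minimizing over $\hat\xi$; the minimum is attained because $\bar\xi(\bar A)$ is a nonempty closed (polytopic) subset of the compact set $\Xi$. Using the norm equivalence $\|\cdot\|_1\leq\gamma_p\|\cdot\|$ then yields $\|\eps_x\|\leq\gamma_A\gamma_p\,\eps(\xi_{0|t},\bar A)\|x\|$, with $\xi=\xi_{0|t}$.

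Substituting this sharper estimate in place of $\gamma_A\|\tilde\xi_{0|t}\|_1\|x\|$ in the proof of Lemma~\ref{lem:ISSxw}, and combining it with the nominal descent $\VV_{\xi_{t+1}^N}^\MPC(x_{1|t})-\VV_{\xi_{t}^N}^\MPC(x(t))\leq-\lambda_\rmin(Q)\|x\|^2$ from Lemma~\ref{lem:LPVstable} together with the Lipschitz bound $\VV_{\xi_{t+1}^N}^\MPC(x(t+1))-\VV_{\xi_{t+1}^N}^\MPC(x_{1|t})\leq L\|\eps_x\|$, I would obtain
$$\VV_{\xi_{t+1}^N}^\MPC(x(t+1))\leq\VV_{\xi_{t}^N}^\MPC(x(t))-\lambda_\rmin(Q)\|x(t)\|^2+\gamma_A L\gamma_p\,\eps(\xi,\bar A)\|x(t)\|.$$
Finally, using $\|x(t)\|\leq\gamma_\eta$ on the compact RPI set $\XX_\eta$ exactly as in Theorem~\ref{th:ISS} absorbs the factor $\|x(t)\|$ into the gain, giving the claimed inequality with $\gamma_{\rm ISS}=\gamma_A L\gamma_p\gamma_\eta$ and hence identifying $\VV_{\xi^N}^\MPC$ as an ISS Lyapunov function with respect to $\eps(\xi,\bar A)$.

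The only content beyond Theorem~\ref{th:ISS} is the representation-independence argument of the second paragraph: the hard part is recognizing that, since every $\hat\xi\in\bar\xi(\bar A)$ produces exactly the same closed-loop dynamics, the estimation error relevant to the bound is the distance from $\xi_{0|t}$ to the whole set $\bar\xi(\bar A)$, rather than to a single, arbitrarily fixed $\bar\xi$. Once this is granted, the remainder is a verbatim repetition of the earlier arguments, and the improvement is genuine precisely when $\bar\xi(\bar A)$ has cardinality greater than one, so that $\eps(\xi,\bar A)$ can be strictly smaller than $\|\bar\xi-\xi_{0|t}\|$ for the worst-case representative $\bar\xi$.
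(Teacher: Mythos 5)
Your proposal is correct and follows essentially the same route as the paper: the paper's own (much terser) argument likewise observes that the estimation error enters only through the state prediction error, which depends on the true dynamics only via $\bar A$, so the ISS inequality of Theorem~\ref{th:ISS} holds for every $\bar\xi\in\bar\xi(\bar A)$ and hence for the minimizer defining $\eps(\xi,\bar A)$. The only cosmetic difference is that you perform the minimization at the level of the bound on $\eps_x$ rather than at the level of the final inequality \eqref{eq:issLF}, which yields the same gain $\gamma_{\rm ISS}=\gamma_L\gamma_p\gamma_\eta$.
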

The proof follows directly by the fact that Theorem~\ref{th:ISS} and all subsequent results  only 
use the 
difference between the predicted and actual system state, which is the same for all 
$\bar\xi\in\xi(A)$. Thus~\eqref{eq:issLF} 
holds for 
all  values $\bar \xi\in\xi(A)$, which means that it has to hold for the smallest expansion term, 
which is $\eps(\xi,{\bar A})$. Loosely, this means that while formulated on the convex 
combination vector for computational purposes, the  expansion 
term is a function of the difference between the estimated and 
actual system matrix. 
%
%
%\brem
%In~\eqref{eq:system} the input-to-state matrix $B$ is independent of the parameter 
%vector $p$ and, 
%as a consequence,  also in~\eqref{eq:pLDI} there is only one matrix $B$.   Such limitation  can 
%be 
%removed, although in order to obtain LMIs,  \eqref{eq:parTermCtrl} becomes a parameter 
%indipendent 
%controller $\kappa(\xi)=K$ while~\eqref{eq:parLF} 
%is still parameter-dependent, see, e.g.,~\cite{geromel99,cuzzola02}. Alternative approaches are 
%also 
%possible. For instance, given a 
%system with input-to-state matrix $B_\omega(\omega)$, we may obtain 
%an augmented system 
%with constant $B$ by introducing the incremental input form with an artificial input delay.\cvd
%\erem 

\bfig[!t]
\bcnt 
\psfrag{k}{$~~~t$}
\psfrag{x1,x2}{$x_1,x_2$}
\psfrag{x1}{$x_1$}
\psfrag{x2}{$x_2$}
\psfrag{u}{$u$}
%\sfig[]
{\image{.68}{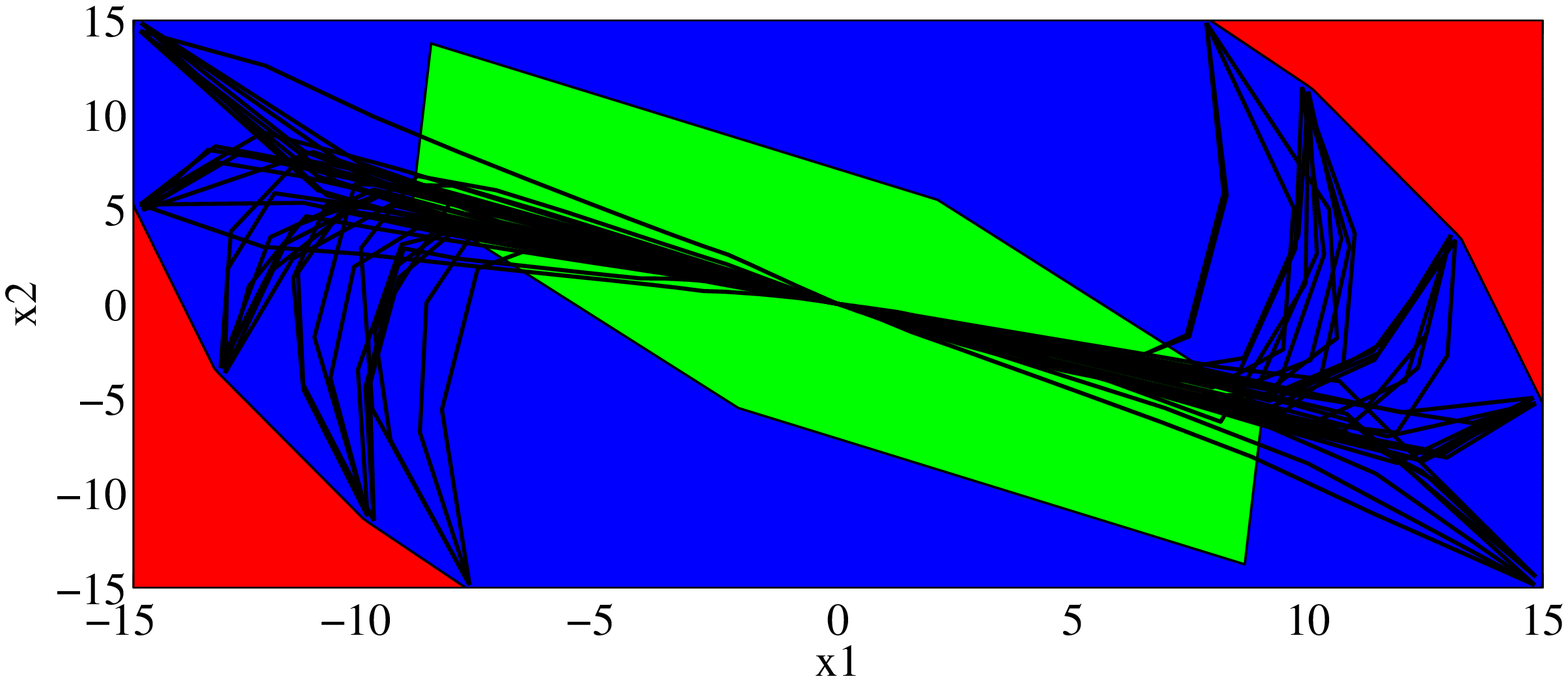}}
%\sfig[Time trajectories of state (upper plot, $\lbrack x\rbrack_1$ blue, $\lbrack x\rbrack_2$ 
%black), 
%and input (lower plot).]{\image{.48}{mRCItrajs2.eps}}
%$[x]_2$ black) and controls (lower plot),]
%$\XX^\infty$ (green), $\CC$ (blue), $\XX$ (red).]
\caption{IAMPC simulations on the numerical example. Trajectories (black), $\XX^\infty$ (green), 
$\CC$ (blue), $\XX$ 
(red).}\label{fig:sim-mRCI}
\ecnt
\efig 

\section{Numerical Simulations}\label{sec:sims}
First, we show some simulations on a numerical example.
We consider~\eqref{eq:system}, where $\ell = 5$, and  $A_1=\bsm 1 
& 0.2 \\ 0 & 1\esm 
$, $A_2=1.1\cdot A_1$,
$A_3=0.6\cdot A_1$,  $A_{4} = \bsm 0.9 & 0.3 \\  0.4 & 0.6\esm$ 
 $A_{5} = \bsm  0.95 & 0 \\ 0.8 & 1.02\esm$, and $B=\bsm -0.035 &  -0.905\esm'$. While being 
only second order system, this example is challenging because some of the dynamics are stable 
and some unstable, and the system matrices are in some cases significantly different. In fact, for a 
similar 
system, \cite{dic15}~showed that without proper cost function adaptation, the closed-loop 
may not be AS even when the perfect model is estimated. The constraints are 
defined by~\eqref{eq:sysConstr}, where $\XX=\{x\in\rr^2:\ |[x]_i|\leq 15,\ i=1,2\}$, 
$\UU=\{u\in\rr:\ |[u]|\leq 10 \}$. We have implemented a simple estimator that computes the 
least 
squares  solution $\varrho(t)$ based on past data window of $N_m$ steps and applies a first order 
filter on the projection of $\varrho(t)$ onto $\Xi$, i.e., $\xi(t+1)= 
(1-\varsigma)\xi(t)+\varsigma\cdot {\rm proj}_\Xi(\varrho(t))$, where
$\varsigma\in\rr_{(0,1)}$, and
$[\xi(0)]_i=1/\ell$, $i\in\zint{1,\ell}$. Such simple estimator satisfies 
Assumption~\ref{ass:estimate} because projection, summation and the guarantee that the result 
is a convex combination vector. Also, the least square problem can be regularized by a term 
based on 
$\varrho(t-1)$, and the projection can be computed by solving a simple QP. In the simulations we 
use the QP solver in~\cite{pqp}, for both projection and MPC control computation. We design the 
controller according to Theorem~\ref{th:constrStability1}, where 
$\CC=\CC^\infty$, and we select $N_m=3$ and $N=8$, which is the smallest value 
such that  $\SS^{(N)}\supseteq 
\CC^\infty$   by~\eqref{eq:horizonDesign}. Figure~\ref{fig:sim-mRCI}  shows the 
simulations where the initial conditions
are the 
vertices of 
$\CC$ and for each initial condition, $4$ different simulations with different (random) values of 
$\bar \xi\in\Xi$ are executed. 
Figure~\ref{fig:sim-ISS}  compares the cases 
where $\varsigma = 1/2$ and $\varsigma = 1/20$, i.e., fast versus slow estimation, thus 
showing the impact of the estimation error on the closed-loop behavior.

%\bfig
%\bcnt 
%\psfrag{k}{$~~~t$}
%\psfrag{x1,x2}{$x_1,x_2$}
%\psfrag{x1}{$x_1$}
%\psfrag{x2}{$x_2$}
%\psfrag{u}{$u$}
%{\image{.48}{nRCIplane2.eps}}
%\caption{Simulation of the IAMPC in closed loop with the numerical example with 
%$\CC=\RR^{(N)}$, $N=6$, phase plane trajectories (black), $\XX^\infty$ (green), $\CC$ 
%(blue), 
%$\XX$ 
%(red).}\label{fig:sim-nRCI}
%\ecnt
%\efig 

\bfig
\bcnt 
\psfrag{k}{$~~~t$}
\psfrag{x1,x2}{$x_1,x_2$}
\psfrag{x1}{$x_1$}
\psfrag{x2}{$x_2$}
\psfrag{u}{$u$}
{\image{.78}{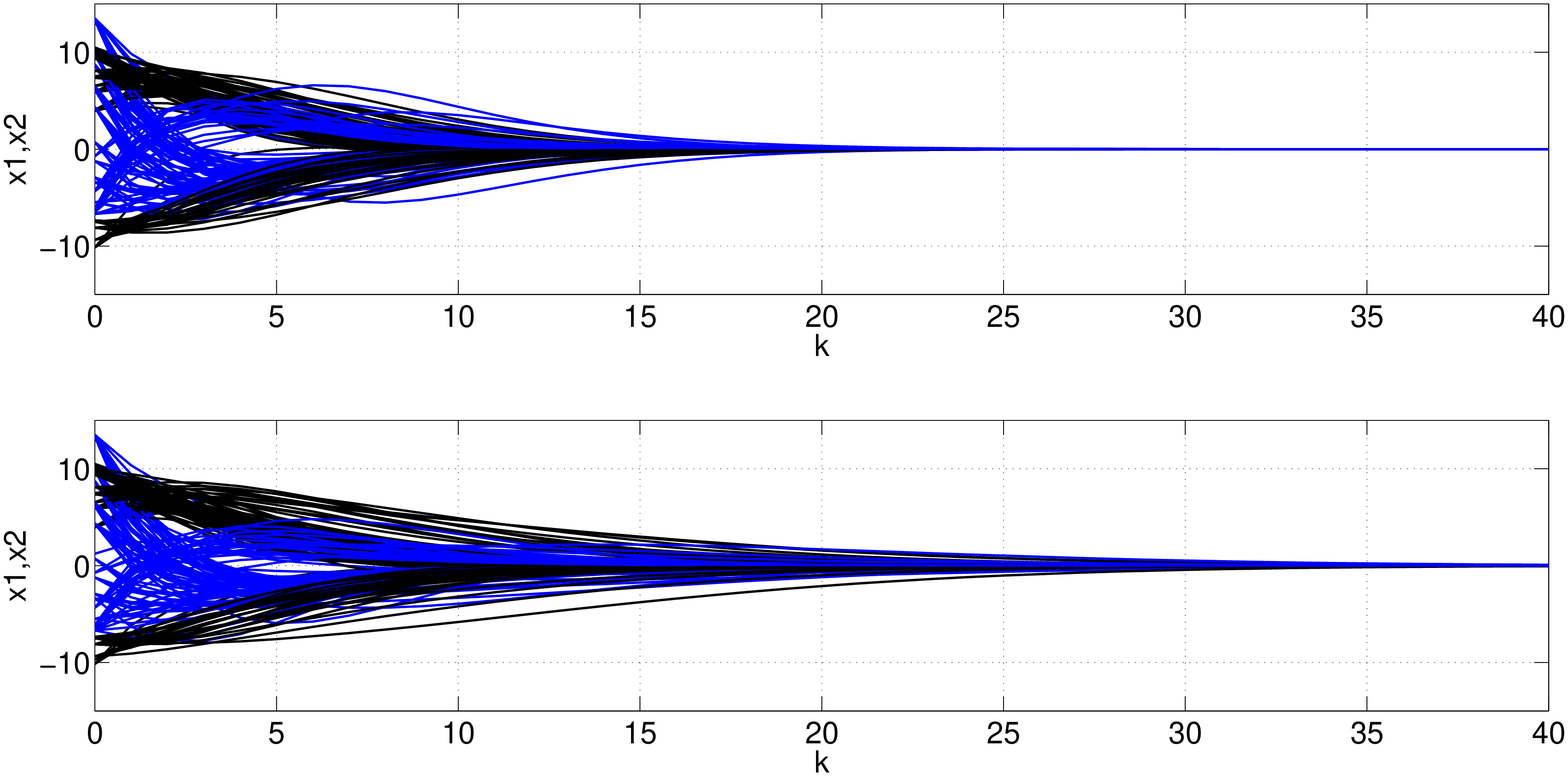}}
\caption{Simulations with fast ($\varsigma = 1/2$, upper plot) and slow ($\varsigma = 
1/20$, upper 
plot) 
parameter estimator, $\lbrack x\rbrack_1$ blue, $\lbrack x\rbrack_2$ 
black.}\label{fig:sim-ISS}
\ecnt
\efig 

\section{Conclusions and Future Work}\label{sec:conclus}
We have proposed an indirect-adaptive MPC that guarantees robust constraint satisfaction, 
recursive 
feasibility, and ISS with respect to the parameter estimation error, yet has computational 
requirements similar to standard MPC.
%, 
%since it only requires the solution of a quadratic program at each sampling period, while 
%the cost function update only requires simple algebraic operations. 

The IAMPC can be easily modified to handle uncertainty  also in the 
input-to-state matrix $B$, to exploit non-maximal  yet faster to compute  RCI sets, and to account 
for additive disturbances. Future works will detail these, as well considering tracking  and 
designs resulting in a different ISS expansion term providing AS even in the presence of a 
small-but-non-zero
error in the parameter estimate.
%\begin{proof}
%Consider the largest $\rho>0$ such that $\SS(\rho)\subseteq \bar \XX$, by 
%Lemma~\ref{lem:finiteTimeVi} we 
%have that 
%there exists $\bar t$ such that
%for all $x(h)\in\bar \XX$, for all $\{i(h+t)\}_{t=0}^{\bar t}$,  
%$\VV_{i(h+\bar t)}(x(h+\bar 
%t))\leq 
%\rho 
%$ and 
%hence by Lemma~\ref{lem:intersectPI} 
%$x\in\SS(\rho)$. Since $\SS(\rho)$ is invariant, 
%$x(h+t)\in\SS(\nu)\subseteq\bar \XX$ for all all 
%$t\geq 
%\bar t$. 
%Thus, since by 
%assumption $x(h+t)\in\bar \XX$ for all $t\in \zz_{[0,\bar t]}$, $x(h+t)\in\bar 
%\XX$ for all $t\geq 
%0$.
%\end{proof}

%\section{Existence of the Terminal Set}\label{sec:TermSetExistence}

\bibliographystyle{IEEEtran}
\bibliography{IAMPC_pLF-bib}

\end{document}